\documentclass[a4paper,11pt]{article}
\pdfoutput=1
\usepackage{jheppub}
\makeatletter
\gdef\@fpheader{}
\makeatother
\usepackage[a4paper,left=2.05cm,right=2.05cm,top=2.35cm,bottom=2.35cm]{geometry}
\usepackage{amsthm}
\usepackage{booktabs}
\usepackage{tabularx}
\usepackage{xcolor}
\usepackage{array}
\usepackage{mathtools}
\usepackage{longtable}
\usepackage{graphicx}
\usepackage{flafter}
\usepackage{microtype}
\usepackage{url}
\newcommand{\cO}{\mathcal O}
\newcommand{\PP}{\mathbb P}
\newcommand{\CC}{\mathbb C}
\newcommand{\dd}{\mathrm d}
\newcommand{\MA}{\mathrm{MA}}
\newcommand{\ESS}{\operatorname{ESS}}
\newcommand{\CVaR}{\operatorname{CVaR}}

\newtheorem{proposition}{Proposition}[section]
\newtheorem{corollary}[proposition]{Corollary}

\newtheorem{lemma}[proposition]{Lemma}

\title{\boldmath Positive Tensor-Network K\"ahler Metrics\\
on gCICY Threefolds}

\author{Juntao Wang}
\affiliation{Beijing Institute of Mathematical Sciences and Applications,
Beijing 101408, China}
\emailAdd{jtwang.bimsa@gmail.com}

\abstract{We develop and implement a positive tensor-network parameterization
for computing Ricci-flat K\"ahler metrics on Calabi--Yau manifolds.  It replaces
the large Hermitian coefficient matrix of a high-degree algebraic
metric by a matrix-product factorization.  For the immersed source spaces
used here, the resulting metric is globally positive for every parameter
value and, at fixed local and bond dimensions, its number of parameters
grows only linearly with the algebraic degree.  We test the construction on
three generalized complete-intersection Calabi--Yau (gCICY) threefolds,
constructing chart by chart the generalized sections, holomorphic volume
forms and sampling measures that define and train the metric there.  From a common low-degree metric, the tensor network outperforms a
parameter-matched neural potential using the same section data, reducing both
bulk errors and the one-percent tail conditional mean in every paired run.
It also reaches a substantially lower error than direct optimization of an
unrestricted Hermitian metric of the same degree from the same start, with
both methods optimized to validation convergence under their respective
schedules.  On a second geometry, a
higher-degree network with fewer parameters than a lower-degree unrestricted
Hermitian baseline substantially reduces the same-sample errors.  We further
observe saturation within the tested calculations: at fixed bond
dimension, increasing the degree eventually plateaus; increasing the bond
dimension at fixed optimization effort gives no resolved gain; and the outcome
depends strongly on initialization and optimization path.}

\keywords{Calabi--Yau manifolds, generalized complete intersections,
tensor networks, numerical K\"ahler metrics, Monge--Amp\`ere equation}

\begin{document}
\maketitle
\raggedbottom

\section{Introduction}
\label{sec:introduction}

Calabi--Yau threefolds are standard internal spaces in supersymmetric
string compactifications \cite{candelas1985vacuum}.  The massless spectrum
is often determined by topological and holomorphic data, whereas many
non-holomorphic observables are not: canonically normalized Yukawa
couplings, Kaluza--Klein masses and curvature-dependent corrections depend
on the Ricci-flat metric of the internal space
\cite{butbaia2024physical,berglund2025precision,braun2008spectrum,
cui2020curvature}.
Yau's theorem guarantees that a unique such metric exists in each K\"ahler
class \cite{yau1978}, but it does not
usually provide the metric in closed form.  Standard numerical approaches
include direct discretizations of the Ricci-flat equations on special
geometries; Donaldson's balanced metrics, which determine a positive Hermitian
form on a finite-dimensional space of holomorphic sections by a fixed-point
iteration; algebraic energy minimization, which varies such a Hermitian form
to reduce the variation of the Monge--Amp\`ere ratio; and neural K\"ahler
potentials, which use a neural network to parameterize a scalar potential or a
correction to a reference potential
\cite{headrick2005k3,donaldson2005,douglas2006,braun2008,
headrick2013energy,ashmore2019,anderson2021,jejjala2022neural,
berglund2023curvature,larfors2022}.  A recent pedagogical review compares
these principal approaches \cite{anderson2023lectures}.  Their outputs are
needed whenever an observable depends on the local metric rather than only on
topology, for example in tests of localized curvature scales relevant to
higher-derivative expansions \cite{cui2020curvature} and in Rayleigh--Ritz
approximations to the scalar Laplacian
\cite{braun2008spectrum,ashmore2023spectra}.  A recent roadmap surveys
the physical and mathematical questions that become accessible once such
metric data are available \cite{berglund2026roadmap}.
Current developments include reusable numerical software and moduli-dependent
neural solvers, symmetry-informed and globally defined K\"ahler-potential
models, and compact analytic or symbolic descriptions of metrics or
metric-derived observables
\cite{larfors2021size,gerdes2023cyjax,butbaia2025cymyc,hendi2025invariant,
mirjanic2025symbolic,lee2025approximate,eng2026symbolic,
mirjanic2026,rahman2026globalcy}.  Related work has constructed approximate
metrics with explicit K\"ahler-moduli dependence and, more recently,
simultaneous complex-structure and K\"ahler-moduli dependence
\cite{constantin2026kahler,constantin2026fullmoduli}.  Numerical Ricci-flat
metrics have also been combined with harmonic forms and a numerically solved
warp factor to construct warped Type IIB flux backgrounds
\cite{lust2026warped}.

For the algebraic approaches, accuracy is raised by raising the degree
\(k\), but on a polarized Calabi--Yau threefold the space of degree-\(k\)
sections of the polarizing line bundle \(L\) has dimension
\(N_k=O(k^3)\), so an unrestricted Hermitian
parameterization carries \(O(k^6)\) real
parameters.  Raising the degree therefore quickly becomes prohibitive.
Our first aim is a metric family whose parameter count grows
more slowly in \(k\) while remaining inside the positive algebraic class.  Tensor
networks are the natural candidate: for positive integers \(b\) and \(m\), repeated multiplication
\(H^0(X,L^b)^{\otimes m}\to H^0(X,L^{mb})\) presents high-degree sections
as products of low-degree ones, and a positive matrix-product
factorization of the Hermitian form in these coordinates has parameter
count linear in \(m=k/b\), with positivity built in.  The calculations
below test this compressibility hypothesis; they do not assume that the
bond dimensions required at fixed accuracy remain bounded asymptotically.

Our second aim is to compute metrics on generalized complete
intersections (gCICYs).  These enlarge the ordinary CICY construction by
allowing negative entries in later columns of the configuration matrix
\cite{anderson2016gcicy,berglund2016hirzebruch,garbagnati2017,jia2020},
contain members not realized by ordinary complete intersections, have also
been studied using computational and machine-learning methods
\cite{cui2023gcicy}, and have been used in
heterotic line-bundle model building \cite{larfors2020heterotic}.  Yet, to
our knowledge, no numerical Ricci-flat metric has been computed on any of
them: the
negative entry describes a line bundle on an intermediate complete
intersection, its sections are not polynomials on the ambient product,
and the defining equations, holomorphic volume form, sampling density,
section vector and tangent map must all be constructed in an atlas adapted
to the staged geometry and shown to transform consistently on overlaps.  The
two aims meet here: the tensor-network (TN) ansatz needs only a low-degree
section space and its multiplication, precisely the data that remain
computable on a gCICY, so the compressed family is particularly natural
on these spaces.

Three earlier constructions are close in motivation.  The
holomorphic networks of Ref.~\cite{douglas2022holomorphic} learn a
subspace of high-degree sections through homogeneous polynomial
activations and insert a Hermitian combination of them into an algebraic
K\"ahler potential.  The Grassmannian method of
Ref.~\cite{ek2024grassmannian} learns an efficient subspace of a fixed
section space and determines the Hermitian matrix on it by balanced
iteration or direct optimization.  The pseudoinverse iteration of
Ref.~\cite{berglund2026balancedsyz} adapts the balanced-metric fixed point
to ambient section data in a canonical monomial basis.  None factorizes
the lifted Hermitian operator in product coordinates formed by repeatedly
multiplying low-degree sections.  Ref.~\cite{douglas2022feedforward} had,
however, already proposed parameterizing Fubini--Study potentials by
matrix-product states, in a trace-form bihomogeneous sketch left
explicitly untried.  Our ansatz is structurally
different: rather than contracting the matrix-product state directly into
the bihomogeneous potential, we represent a factor of the lifted
Hermitian operator as a matrix-product operator, so that the coefficient
form is a manifest square plus a fixed positive reference block.  This
yields Hermiticity and positivity by construction, an exact degree
continuation, and---under an immersion condition---a globally positive
K\"ahler metric for every parameter value.

The tensor-network ingredients themselves---matrix-product states and
operators, and positivity through a factorized square \(B^\dagger
B\)---are standard
\cite{white1992dmrg,schollwock2011mps,murg2010mpo,oseledets2011tt,
werner2016positive}.  What is new is their geometric use: the
tensor-product coordinates are fixed by the multiplication map rather than
learned; the learned object is a matrix-product factor \(B_\theta\) in the
positive form \(B_\theta^\dagger B_\theta\), rather than a selected
subspace of sections; and the
contracted norm and its derivatives are evaluated without ever forming the
high-degree section vector or Hermitian matrix.  To our knowledge, this
combination has not previously been used to construct and train global
algebraic K\"ahler metrics \cite{douglas2022feedforward}; on gCICYs, the
stage-adapted atlas,
generalized sections, Poincar\'e residues and sampling machinery are
essential parts of the global realization.

We construct the stage-adapted generalized sections, Poincar\'e residues,
tangent maps and fibrewise sampling measures required to evaluate algebraic
metrics globally on gCICYs, and prove that the chosen TN source maps are
closed immersions.  Consequently, every parameter choice defines a global
K\"ahler metric in the fixed class, while the parameter count grows linearly
with the number of multiplication sites at fixed local and bond dimensions.

The three examples address complementary questions.  On \(X_{11}\), a
type-\((1,1)\) threefold, we compare the TN with a parameter-matched neural
K\"ahler-potential correction and with the unrestricted Hermitian family at
the same algebraic degree, all from a common initial metric.  On the K3-fibred
type-\((2,1)\) example \(X_{21}\), we ask whether compression makes a higher
algebraic degree useful at a fixed parameter scale and study separately the
effects of degree, bond dimension and initialization.  On the two-stage
type-\((2,2)\) geometry \(X_{22}\), we test the same construction with two
generalized defining sections and determine the effect of enlarging the local
operator space.  Together the examples show that additional formal capacity
improves the metric only when the optimization can exploit the new directions.
The TN is therefore a structured positive subfamily of the usual
finite-degree Hermitian ansatz: it makes selected high-degree families
computationally accessible, while unrestricted Hermitian forms remain
natural low-degree baselines and parent spaces for continuation.

We first formulate the Ricci-flat objective and the positive TN ansatz in
Sec.~\ref{sec:method}.  Section~\ref{sec:geometry} then constructs the
gCICY sections, residues, tangent data and sampling measures needed to
evaluate the metric globally.  With these ingredients in place,
Sec.~\ref{sec:results} presents the three numerical studies, and
Sec.~\ref{sec:conclusion} summarizes their implications and limitations
and outlines directions for future work.
The appendices collect the geometric derivations and numerical settings.

\section{Ricci-flat K\"ahler metrics and a positive tensor-network ansatz}
\label{sec:method}

\subsection{Ricci-flat K\"ahler metrics and algebraic approximations}
\label{sec:continuous-geometry}

Let \(X\) be a compact Calabi--Yau threefold.  We fix a K\"ahler
class \([\omega]\in H^{1,1}(X,\mathbb R)\) and choose a
nowhere-vanishing holomorphic \((3,0)\)-form
\(\Omega\).  Equivalently, \(\Omega\) is a nowhere-vanishing section
of the trivial canonical bundle \(K_X\).

Choose a holomorphic coordinate chart \(U\subset X\) with coordinates
\(u=(u^1,u^2,u^3)\).  Indices \(a,b\) range from \(1\) to \(3\), and
repeated holomorphic--antiholomorphic index pairs are summed.  We write
\(\partial_a=\partial/\partial u^a\) and
\(\bar\partial_{\bar b}=\partial/\partial\bar u^{\bar b}\).  Let \(g\) be a
K\"ahler metric whose associated K\"ahler form \(\omega_g\) represents the
fixed class \([\omega]\), and write \(g_{a\bar b}\) for its local
positive-definite Hermitian metric.  On \(U\) there is a real
function \(K_U\), called a local
K\"ahler potential, such that
\begin{equation}
 \omega_g=i\,g_{a\bar b}\,
 \dd u^a\wedge\dd\bar u^{\bar b}
 =i\,\partial\bar\partial K_U,
 \qquad
 g_{a\bar b}=\partial_a\bar\partial_{\bar b}K_U.
 \label{eq:local-kahler-potential}
\end{equation}
The potential \(K_U\) need not be a globally defined function.  If \(V\)
is a second coordinate chart, then on an overlap \(U\cap V\) there is a
holomorphic function \(f_{UV}\) for which the two local potentials may
differ by the K\"ahler transformation
\begin{equation}
 K_V=K_U+f_{UV}+\overline{f_{UV}},
 \label{eq:kahler-transformation}
\end{equation}
which leaves \(\partial\bar\partial K_U\), and hence the metric, unchanged.

The global holomorphic three-form also has a chart-dependent local
description, which defines its holomorphic coefficient \(h_U\) and the
positive density \(\rho_{\Omega,U}\):
\begin{equation}
 \left.\Omega\right|_U
 =h_U(u)\,\dd u^1\wedge\dd u^2\wedge\dd u^3,
 \qquad
 \rho_{\Omega,U}=|h_U(u)|^2.
 \label{eq:continuous-cy-data}
\end{equation}
Because \(\Omega\) never vanishes, \(h_U\) is
nowhere zero on \(U\).  The replacement \(\Omega\mapsto c\Omega\), with
\(c\in\mathbb C\setminus\{0\}\), is the standard normalization freedom of
a holomorphic volume form and is independent of our numerical method.  It
sends \(\rho_{\Omega,U}\mapsto |c|^2\rho_{\Omega,U}\).  Write
\(\det g_U=\det(g_{a\bar b})\) for the determinant of the \(3\times3\)
Hermitian metric on this chart.  Under a holomorphic change of coordinates
\(u\mapsto u'\),
\begin{equation}
 h_{U'}=h_U\det\!\left(\frac{\partial u}{\partial u'}\right),
 \qquad
 \det g_{U'}=
 \left|\det\!\left(\frac{\partial u}{\partial u'}\right)\right|^2
 \det g_U.
 \label{eq:metric-volume-coordinate-law}
\end{equation}
Thus neither \(\det g_U\) nor \(\rho_{\Omega,U}\) is separately a scalar,
but they transform by the same positive Jacobian factor.  Their ratio
therefore defines the global coordinate-invariant Monge--Amp\`ere function
\begin{equation}
 \eta_g=\frac{\det g_U}{\rho_{\Omega,U}}.
 \label{eq:continuous-ma-ratio}
\end{equation}
Equivalently, define the positive volume form associated with \(\Omega\) by
\begin{equation}
 \dd\mu_\Omega=i\,\Omega\wedge\bar\Omega.
 \label{eq:omega-volume-form}
\end{equation}
The phase is the convention appropriate to complex dimension three.
Then the local definition above is exactly the global volume-form identity
\begin{equation}
 \frac{\omega_g^3}{3!}=\eta_g\,\dd\mu_\Omega.
 \label{eq:ma-volume-identity}
\end{equation}

The symbol \(\operatorname{Ric}(\omega_g)\) denotes the Ricci
\((1,1)\)-form of the K\"ahler metric.  It obeys
\begin{equation}
 \operatorname{Ric}(\omega_g)
 =-i\,\partial\bar\partial\log\det g_U
 =-i\,\partial\bar\partial\log\eta_g,
 \label{eq:ricci-ma-relation}
\end{equation}
where the second equality uses
\(\partial\bar\partial\log|h_U|^2=0\).  Hence Ricci flatness is equivalent
to \(\eta_g\) being constant on compact connected \(X\).  If this constant
is denoted by \(\kappa\), Eq.~\eqref{eq:ma-volume-identity} fixes it as
\begin{equation}
 \kappa=
 \frac{\displaystyle\int_X\omega_g^3/3!}
      {\displaystyle\int_X\dd\mu_\Omega}.
 \label{eq:ma-target-constant}
\end{equation}
The numerator depends only on the fixed class \([\omega]\).  Equation
\eqref{eq:ma-volume-identity} therefore states the Ricci-flat condition
equivalently as proportionality between \(\omega_g^3\) and the positive
volume form \(\dd\mu_\Omega\).  Under \(\Omega\mapsto c\Omega\), both
\(\eta_g\) and \(\kappa\) are multiplied by \(|c|^{-2}\), so this equation
and its metric solution are unchanged.  Yau's theorem gives a unique
solution in the chosen class~\cite{yau1978}.

Numerically, the constant \(\kappa\) need not be supplied in advance.  The
normalized importance weights introduced in Sec.~\ref{sec:sampling}
approximate integration against
\(\dd\mu_\Omega/\int_X\dd\mu_\Omega\).  Hence
\(\sum_i w_i\eta_g(x_i)\), with \(\sum_iw_i=1\), is a Monte Carlo estimator
of \(\kappa\).  We divide \(\eta_g\) by this estimate and measure
the deviation of the resulting dimensionless ratio from one.  The overall
normalization of \(\Omega\) cancels in this quotient, whereas any spatial
variation in \(\eta_g\) remains visible.  We construct a tractable family
of local potentials obeying the transition law
\eqref{eq:kahler-transformation} and minimize these deviations.  The precise
sample-normalized ratio and the bulk and tail statistics are defined in
Sec.~\ref{sec:training}.

\paragraph{Unrestricted algebraic metric families.}

Let \(L\to X\) be an ample holomorphic line bundle, called a
polarization, and write \(c_1(L)\) for its first Chern class.  We use the
standard topological convention
\(c_1(L)=[\omega_{\rm FS}]/(2\pi)\), where
\(\omega_{\rm FS}=i\,\partial\bar\partial\log\lVert Z\rVert^2\) is the
Fubini--Study form in homogeneous coordinates \(Z\) on projective space.
Thus the
algebraic forms below represent the fixed class
\begin{equation}
 [\omega]=2\pi c_1(L).
 \label{eq:polarization-normalization}
\end{equation}
This is also the normalization used in the numerical calculations
and all dimensionful quantities reported below.  For a positive integer
\(k\), the notation \(L^k\) means
the \(k\)-fold tensor power of \(L\), and \(H^0(X,L^k)\) is the
finite-dimensional complex vector space of its global holomorphic sections.  Set
\(N_k=\dim_{\mathbb C}H^0(X,L^k)=h^0(X,L^k)\), choose an ordered basis
\((s_{k,1},\ldots,s_{k,N_k})\), and collect its values in the column vector
\(s_k(x)\).  Finally, let \(H\) be an \(N_k\times N_k\) positive-definite
Hermitian matrix and let \(\dagger\) denote Hermitian transpose.  In a
local frame of \(L^k\), the local K\"ahler potential of the corresponding
degree-\(k\) algebraic metric is
\begin{equation}
 K_H=\frac{1}{k}\log\!\left(s_k^\dagger Hs_k\right),
 \qquad H>0.
 \label{eq:full-H}
\end{equation}
Allowing all Hermitian entries of \(H\) to vary, subject only to \(H>0\)
and the irrelevant overall scale, gives the unrestricted degree-\(k\)
Hermitian family.
When the projective map defined by the complete space \(H^0(X,L^k)\)
is an embedding of \(X\), as verified for the
section spaces used below, the associated K\"ahler form is
\(\omega_H=i\,\partial\bar\partial K_H\).  It is global because a
line-bundle frame change adds only a K\"ahler transformation to \(K_H\).

An arbitrary matrix \(H\) does not solve the Ricci-flat equation.  Each
\(\omega_H\) lies in the class fixed in
Eq.~\eqref{eq:polarization-normalization}.  For the unique
Ricci-flat form \(\omega_{\rm RF}\) in this class, there is a sequence
\(H_k>0\) such that
\(\omega_{H_k}\to\omega_{\rm RF}\) smoothly
\cite{tian1990polarized,zelditch1998szego}.  For fixed \(k\), however,
varying \(H\) explores only a finite-dimensional subset of the K\"ahler
metrics in the chosen class, and the exact Ricci-flat metric need not lie
in this subset.  Since Ricci flatness is equivalent to the
Monge--Amp\`ere ratio \(\eta_H\) being constant, we choose \(H\) by
minimizing the spatial variation of \(\eta_H\).  A nonzero minimum
represents the approximation error of the finite-degree family.

Donaldson's balanced algorithm gives one prescription for choosing \(H\) in
this finite-dimensional family.  Starting from a positive matrix \(H_n\),
one first constructs the potential
\(K_{H_n}=k^{-1}\log(s_k^\dagger H_n s_k)\) and its associated K\"ahler
metric.  One then averages the normalized products of sections with
respect to a chosen smooth positive volume form \(\dd\mu\).  Writing
\(\operatorname{Vol}(X)=\int_X\dd\mu\), the update is
\begin{equation}
 T(H_n)_{\alpha\bar\beta}
 =\frac{N_k}{\operatorname{Vol}(X)}\int_X
 \frac{s_{k,\alpha}\overline{s_{k,\beta}}}
      {s_k^\dagger H_n s_k}\,\dd\mu,
 \qquad H_{n+1}\propto T(H_n)^{-1},
 \label{eq:balanced-t-map}
\end{equation}
where \(\propto\) denotes equality up to a positive scalar.  A balanced
metric is a fixed point up to this irrelevant overall scale, so that
\(T(H)\propto H^{-1}\).  Constructing \(K_{H_n}\) and then performing the
section-space integration are conventionally called the Fubini--Study and
Hilb steps, respectively; their composition gives the usual \(T\)-operator
\cite{donaldson2005,douglas2006,braun2008}.
Equation~\eqref{eq:balanced-t-map} is the balanced-metric iteration written
in the present \(H\)-convention.  For the
Calabi--Yau calculation, taking \(\dd\mu\) proportional to
\(\dd\mu_\Omega\) gives fixed points whose associated K\"ahler forms
converge to \(\omega_{\rm RF}\) as \(k\) increases
\cite{donaldson2005,douglas2006,braun2008}.
Energy-functional methods instead minimize a functional of the
Monge--Amp\`ere error directly.
Let \(g_H\) be the metric obtained from \(K_H\), let \(\eta_H\) denote its
Monge--Amp\`ere ratio from Eq.~\eqref{eq:continuous-ma-ratio}, and define
the measure average
\(\langle\eta_H\rangle=(\operatorname{Vol}(X))^{-1}
\int_X\eta_H\,\dd\mu\).  A representative squared Monge--Amp\`ere energy
over the same unrestricted Hermitian family is then
\begin{equation}
 E_{\rm MA}(H)=\int_X
 \left(\frac{\eta_H}{\langle\eta_H\rangle}-1\right)^2\dd\mu.
 \label{eq:full-ma-energy}
\end{equation}
Because \(\dd\mu\) is positive and has full support, this energy is zero
precisely when \(\eta_H\) is constant, which by
Eq.~\eqref{eq:ricci-ma-relation} is the Ricci-flat condition.  At finite
degree, its minimum over \(H\) measures the approximation error of the
chosen algebraic family; a fitted value above that minimum may additionally
contain optimization error.
Both balanced iteration and direct minimization of \(E_{\rm MA}\) can be
applied to the same unrestricted Hermitian ansatz \(K_H\)
\cite{headrick2013energy}.  They differ in how \(H\) is selected, not in the
number of available Hermitian coefficients.  Reference
\cite{berglund2026balancedsyz} instead reformulates the balanced iteration in
the canonical ambient monomial coordinates.  When the restriction map from
ambient sections to \(H^0(X,L^k)\) is surjective, its pseudoinverse
construction is equivalent to the standard iteration and produces the same
balanced metric on \(X\).  It preserves the ambient monomial labels but does
not reduce the intrinsic size of the Hermitian family.  The tensor-network
ansatz introduced below addresses a different problem: it replaces the
unrestricted matrix \(H\) by a structured factorization with fewer trainable
coefficients.

For a polarized threefold, Riemann--Roch gives \(N_k=O(k^3)\).  Denote the
raw number of real coefficients in an unrestricted Hermitian matrix
by \(P_{\rm Herm}(k)\).  Then
\begin{equation}
 P_{\rm Herm}(k)=N_k^2=O(k^6).
\label{eq:full-parameter-law}
\end{equation}
One overall positive rescaling of \(H\) adds a constant to \(K_H\) and
therefore leaves the metric unchanged; the parameter counts quoted below
include this single redundant scale.  Optimization over this family remains
highly effective at degrees for which storing and evaluating the Hermitian
matrix is computationally feasible; the purpose of the TN is to make selected
high-degree algebraic families accessible beyond that regime.

\subsection{Positive tensor-network ansatz}
\label{sec:tn-parameterization}

The object that we wish to parameterize is the same high-degree Hermitian
form that appears in Eq.~\eqref{eq:full-H}.  To emphasize its degree, denote
it here by \(H_k>0\).  Balanced iteration and energy minimization both use
increasing \(k\) as a refinement mechanism: the larger section space gives
the algebraic metric access to finer geometric structure.  The difficulty is
that an unrestricted \(H_k\) has \(N_k^2\) real coefficients.  Our TN is
therefore a structured ansatz for this high-degree Hermitian form itself.

Three requirements guide the ansatz.  First, a reliable metric found at an
affordable degree should be carried to a higher degree without changing its
K\"ahler potential; the additional high-degree freedom can then be trained
rather than used to reconstruct the starting geometry.  Second, every value
of the trainable parameters should define a global positive algebraic
K\"ahler metric.  Third, neither storage nor pointwise evaluation should
require the full high-degree Hermitian matrix.  These requirements lead respectively
to an exact degree lift, a factorized square, and a tensor-network
representation of that factor.

\paragraph{Exact degree continuation and section-ring realization.}

Choose an affordable base degree \(b\) and a positive integer \(m\), and set
\begin{equation}
 V=H^0(X,L^b),\qquad d=\dim_{\mathbb C}V,\qquad k=mb.
 \label{eq:degree-relation}
\end{equation}
Here \(m\) is the number of TN sites and \(k\) the degree of the
resulting algebraic metric.  Choose an ordered basis
\(s=(s_1,\ldots,s_d)^{\mathsf T}\) of \(V\) whose sections have no common
zero.  The superscript \({}^{\mathsf T}\) denotes ordinary transpose.  At
\(x\in X\), the vector \(s(x)\in\CC^d\) contains the section values in a
local frame of \(L^b\).

Before we give the index formulas, the construction can be pictured as three
maps.  Take \(m\) copies of the low-degree section vector, apply a trainable
linear map \(B_\theta\), and take the squared norm of the result:
\[
 s(x)\ \longmapsto\ v(x)=s(x)^{\otimes m}
 \ \xmapsto{\ B_\theta\ }\
 \Psi_\theta(x)
 \ \longmapsto\
 F_\theta(x)=\|\Psi_\theta(x)\|^2
 +\varepsilon_{\rm ref}
 \bigl(s(x)^\dagger H_0s(x)\bigr)^m .
\]
Here \(B_\theta\) is the trainable map, \(H_0>0\) is a fixed Hermitian form
on \(V\), and \(0<\varepsilon_{\rm ref}<1\) is fixed.  The first map raises the
section degree from \(b\) to \(k=mb\); the second introduces trainable
couplings among the degree-\(k\) product sections; and the squared norm,
together with the reference term, makes the resulting Hermitian form
positive.  The only remaining difficulty is that a general \(B_\theta\)
would itself be exponentially large.  Its tensor-network factorization is
what makes the ansatz computationally useful.

Suppose that \(H_b>0\) is a low-degree Hermitian form obtained, for example,
from balanced iteration or direct minimization of a geometric energy.  Write
\(F_b(x)=s(x)^\dagger H_b s(x)\), so that its local potential is
\(K_b=b^{-1}\log F_b\).  Supply one copy of \(s(x)\) to each of the \(m\)
sites and define the product vector
\begin{equation}
 v(x)=s(x)^{\otimes m},\qquad
 v_{i_1\ldots i_m}(x)=s_{i_1}(x)\cdots s_{i_m}(x).
 \label{eq:product-section-input}
\end{equation}
The tensor product gives an exact degree-\(k\) representation of the same
metric:
\begin{equation}
 F_k^{(0)}(x)=F_b(x)^m
 =v(x)^\dagger H_b^{\otimes m}v(x),
 \qquad
 \frac{1}{k}\log F_k^{(0)}(x)
 =\frac{1}{b}\log F_b(x).
 \label{eq:exact-degree-continuation}
\end{equation}
Thus the passage from \(b\) to \(k=mb\) initially changes only the
representation, not the K\"ahler potential or the metric.  In the
continuation experiments below, this identity supplies the initial
high-degree metric.  After initialization, the low-degree solution is no
longer used as a target: the new parameters are trained directly with the
Monge--Amp\`ere loss.

The exact lift has a bond-dimension-one operator factorization.  If
\(H_b=S_b^\dagger S_b\), then \(S_b^{\otimes m}\) is a product map and hence
has bond dimension one.  For the exact continuations used below, we take
\(H_0=H_b\) and initialize
\(B_\theta^{(0)}=\sqrt{1-\varepsilon_{\rm ref}}\,S_b^{\otimes m}\).  The
trainable and reference terms then sum exactly to \(H_b^{\otimes m}\).
Additional bond channels can then be inserted without changing the
represented function by setting the new block on one side to zero and
the block on the other side to a nonzero value, as specified in
Sec.~\ref{sec:metric-evaluation}; this enlarges the trainable family
without changing the starting metric.  This exactness statement applies to
the bond-dimension-one lifts of a base metric \(H_b\); re-embedding an already trained TN
at a higher degree is discussed in Sec.~\ref{sec:metric-evaluation}.

The ordered products in \(v(x)\) are degree-\(k\) sections, but they need
not be linearly independent.  Their relation to the conventional
degree-\(k\) section space is encoded by the multiplication map
\begin{equation}
 \mu_m:V^{\otimes m}\longrightarrow H^0(X,L^k),\qquad
 \mu_m(s_{i_1}\otimes\cdots\otimes s_{i_m})
 =s_{i_1}\cdots s_{i_m}.
 \label{eq:section-ring-multiplication-map}
\end{equation}
Permuting the factors does not change their product, and further relations
may follow from the defining equations of \(X\).  The image of \(\mu_m\)
is therefore the span of the distinct degree-\(k\) sections represented by
these products.  If \(\mu_m\) is surjective, this span is the complete space
\(H^0(X,L^k)\); otherwise the ansatz uses only the product-generated
subspace.  The map is used to check this coverage, not in pointwise
evaluation, where the factors \(s(x)\) are contracted directly.

\paragraph{Matrix-product representation and positive ansatz.}

The product representation gives access to higher degree, but it also
creates a computational problem: \(v(x)\) has \(d^m\) formal components,
and a general operator on these coordinates is exponentially large.  A
tensor network avoids forming that operator by contracting a sequence of
smaller tensors.  An open-chain matrix-product representation of an
order-\(m\) tensor \(\mathcal T\in(\CC^d)^{\otimes m}\), with
\(i_j=1,\ldots,d\), has the form
\begin{equation}
 \mathcal T_{i_1\ldots i_m}
 =\sum_{a_1,\ldots,a_{m-1}}
 G^{[1]}_{i_1a_1}
 G^{[2]}_{a_1i_2a_2}\cdots
 G^{[m]}_{a_{m-1}i_m}.
 \label{eq:generic-mps}
\end{equation}
The \(G^{[j]}\) are the site tensors.  The auxiliary index
\(a_j=1,\ldots,D_j\) is summed between two neighboring sites, and its range
\(D_j\) is the bond dimension.  This representation is called a
matrix-product state in many-body physics and a tensor train in numerical
analysis~\cite{schollwock2011mps,orus2014tensor,oseledets2011tt}.  Instead of
the \(d^m\) entries of a general tensor, the chain stores
\(\sum_j dD_{j-1}D_j\) entries, with \(D_0=D_m=1\).  Across the cut between
sites \(j\) and \(j+1\), the left--right matrix rank is at most \(D_j\).
The bond dimensions therefore control both the compression and the size of
the represented family.

Our model uses the operator analogue of Eq.~\eqref{eq:generic-mps}.  Each
site now has an input index \(i\), inherited from the corresponding factor
of \(v(x)\), and an output index \(p\).  Choose \(q\) matrices
\(Q_1,\ldots,Q_q\in\CC^{d\times d}\); these matrices form the local
operator dictionary.  At site \(j\), introduce the
trainable coefficient core \(C^{[j]}\) and the resulting site tensor
\(A^{[j]}\):
\begin{equation}
 C^{[j]}\in\CC^{D_{j-1}\times D_j\times q},
 \qquad
 A^{[j]}_{a_{j-1}a_jpi}
 =\sum_{\alpha=1}^q
 C^{[j]}_{a_{j-1}a_j\alpha}(Q_\alpha)_{pi}.
 \label{eq:local-map}
\end{equation}
The dictionary matrices may either be fixed or learned, depending on the
model.  When \(q=d^2\), a fixed matrix-unit dictionary already spans the
whole local operator space \(\operatorname{End}(\CC^d)\).  Learning another
basis of the same space would not enlarge the local linear span, because a
change of dictionary basis can be absorbed into the coefficients
\(C^{[j]}\); it would instead introduce an unnecessary parameter
redundancy.  We therefore fix \(Q_\alpha\) and train only the entries of all
\(C^{[j]}\) for \(X_{11}\) and \(X_{21}\), the geometries introduced in
Sec.~\ref{sec:geometry}.  For the truncated \(q<d^2\) dictionary used on
\(X_{22}\), the shared matrices \(Q_\alpha\)
are trained jointly with the coefficient cores, so that the selected local
operator subspace can adapt.  Accordingly, \(\theta\) denotes the
coefficients \(C^{[j]}\), together with the \(Q_\alpha\) whenever the
dictionary is learned.
The exact continuation described above belongs to this dictionary ansatz
provided
\begin{equation}
 S_b\in\operatorname{span}\{Q_1,\ldots,Q_q\}.
 \label{eq:dictionary-anchor-condition}
\end{equation}
This condition is automatic for the complete matrix-unit dictionary.  For
an exact continuation with a truncated learned dictionary, we impose it by
choosing one normalized dictionary element proportional to \(S_b\) and then
completing the remaining dictionary directions.
For fixed bond indices, \(A^{[j]}\) is a \(d\times d\) matrix from the
local input index \(i\) to the local output index \(p\).  Contracting the
bond indices defines
\begin{align}
 (B_\theta)_{p_1\ldots p_m,\,i_1\ldots i_m}
 &=\sum_{a_1,\ldots,a_{m-1}}
 \prod_{j=1}^m A^{[j]}_{a_{j-1}a_jp_ji_j},
 \qquad a_0=a_m=1,                                      \label{eq:generic-mpo}\\
 B_\theta&:(\CC^d)^{\otimes m}\longrightarrow
 (\CC^d)^{\otimes m}.                                  \nonumber
\end{align}
This is a matrix-product operator
\cite{murg2010mpo,schollwock2011mps,orus2014tensor}.  Applying it to the product section
vector gives
\begin{equation}
 \Psi_\theta(x)=B_\theta v(x),\qquad
 [\Psi_\theta(x)]_{p_1\ldots p_m}
 =\sum_{i_1,\ldots,i_m}
 (B_\theta)_{p_1\ldots p_m,\,i_1\ldots i_m}
 \prod_{j=1}^m s_{i_j}(x).
 \label{eq:explicit-contraction}
\end{equation}
This expression can be evaluated by contracting the chain successively;
neither the \(d^m\)-component vector nor the full operator \(B_\theta\) is
formed.

The final step converts this compressed linear map into a positive
Hermitian form.  For every \(v\in(\CC^d)^{\otimes m}\),
\begin{equation}
 v^\dagger B_\theta^\dagger B_\theta v
 =\|B_\theta v\|^2\geq0.
 \label{eq:factorized-positivity}
\end{equation}
The role of the fixed reference term introduced above is only to make the
form strictly positive; it is not a training target.  The product-coordinate
representative of our high-degree Hermitian ansatz is
\begin{equation}
 \widehat H_\theta=B_\theta^\dagger B_\theta+
 \varepsilon_{\rm ref}H_0^{\otimes m}>0,
 \label{eq:implicit-H}
\end{equation}
and its positive local homogeneous density is
\begin{equation}
 F_\theta(x)
 =v(x)^\dagger\widehat H_\theta v(x)
 =\|\Psi_\theta(x)\|^2+
 \varepsilon_{\rm ref}\bigl(s(x)^\dagger H_0s(x)\bigr)^m.
 \label{eq:positive-density}
\end{equation}
Equations~\eqref{eq:implicit-H} and \eqref{eq:positive-density} together
constitute the central ansatz of this work.  The first gives a compressed positive
representative in product coordinates, while the second evaluates the
corresponding degree-\(k\) algebraic metric.  Since every component of
\(v(x)\) is a section of \(L^k\), this is an ordinary Hermitian form on the
span of the product sections.  When \(\mu_m\) is surjective, that span is
the complete space \(H^0(X,L^k)\) used in Eq.~\eqref{eq:full-H}.
The factorization in Eq.~\eqref{eq:implicit-H} is related to positive
tensor-network constructions~\cite{werner2016positive}.  The resulting
local K\"ahler potential and metric components are
\begin{equation}
 K_\theta(x)=\frac{1}{k}\log F_\theta(x),\qquad
 g_{\theta,a\bar b}=\partial_a\bar\partial_{\bar b}K_\theta.
 \label{eq:potential}
\end{equation}
Their globality and strict metric positivity are established below.

Surjectivity of \(\mu_m\) guarantees that no degree-\(k\) section direction
is lost before the TN factorization is imposed; the bond dimensions then
determine which Hermitian forms in that section space the TN can represent.

The logic of the construction is thus: an affordable low-degree solver
provides a geometric starting point; Eq.~\eqref{eq:exact-degree-continuation}
embeds that metric exactly in higher-degree product coordinates; the TN
replaces the large high-degree Hermitian matrix by a positive
factorization; and direct Monge--Amp\`ere training uses the newly available
high-degree directions to improve on the starting metric.  The TN changes
the parameterization, not the Calabi--Yau equation or its geometric
objective.

\paragraph{Globality, K\"ahler class and positivity.}

We now explain why the fixed reference term in
Eq.~\eqref{eq:positive-density} guarantees a strictly positive metric, not
merely a positive local density \(F_\theta\).  A collection of
holomorphic section values defines a map from \(X\) to projective space.
The pullback of the Fubini--Study form by this map is positive definite
precisely when the map distinguishes every nonzero tangent direction, that
is, when it is an immersion.

Start with the low-degree basis \(s=(s_1,\ldots,s_d)^{\mathsf T}\) introduced
immediately after Eq.~\eqref{eq:degree-relation}.  Its values define the
projective source map
\begin{equation}
 \Phi_s:X\longrightarrow\PP^{d-1},
 \qquad
 \Phi_s(x)=[s_1(x):\cdots:s_d(x)].
 \label{eq:source-map-definition}
\end{equation}
The square brackets denote a projective point: two nonzero vectors represent
the same point when they differ by one common nonzero complex factor.  This
is why the local frame change \(s\mapsto fs\) does not change \(\Phi_s(x)\).
Two conditions on \(\Phi_s\) matter below, and they play different roles.
A point \(x\in X\) is called
a \emph{base point} of the chosen section space if every section in that
space vanishes at \(x\), or equivalently if
\(s_1(x)=\cdots=s_d(x)=0\).  The section space is \emph{basepoint free} when
it has no such point.  Thus at least one \(s_i(x)\) is nonzero at every
\(x\), which guarantees that the projective point \(\Phi_s(x)\) is defined
everywhere.  Once the map is defined, immersion is the additional
requirement that
\[
 \dd\Phi_s|_x:T_xX\longrightarrow T_{\Phi_s(x)}\PP^{d-1}
\]
be injective at every point.  Equivalently, no nonzero tangent vector on
\(X\) is collapsed by the map.  Basepoint freeness by itself does not imply
immersion.

Next consider the fixed reference term.  Since \(H_0>0\), it can be written
as \(H_0=S_0^\dagger S_0\) with \(S_0\) invertible, and
\[
 \bigl(s^\dagger H_0s\bigr)^m
 =\bigl\|S_0^{\otimes m}v(x)\bigr\|^2 .
\]
Thus the holomorphic feature vector contained in the reference term is
\(S_0^{\otimes m}v(x)\), and its associated projective map is
\begin{equation}
 \Phi_0:X\longrightarrow\PP^{d^m-1},\qquad
 \Phi_0(x)=\bigl[S_0^{\otimes m}v(x)\bigr]
 =\nu_m\!\left([S_0s(x)]\right),
 \label{eq:reference-projective-map}
\end{equation}
where we used \(v=s^{\otimes m}\), and where \(\nu_m\) is the tensor-power
form of the \(m\)-fold Veronese map.  The standard Veronese map lands in
\(\PP(\operatorname{Sym}^m\CC^d)\); the displayed target
\(\PP^{d^m-1}=\PP((\CC^d)^{\otimes m})\) uses redundant ordered tensor
coordinates and contains this symmetric subspace linearly.  Thus
\(\nu_m\) is still a projective embedding.  Equation
\eqref{eq:reference-projective-map} therefore has a simple interpretation:
\(\Phi_0\) is obtained from \(\Phi_s\) first by the invertible coordinate
change induced by \(S_0\), and then by the Veronese embedding.  Neither
operation collapses a tangent direction.  Consequently,
\[
 \Phi_s\text{ is an immersion}
 \quad\Longleftrightarrow\quad
 \Phi_0\text{ is an immersion}.
\]
The two maps therefore serve distinct roles.  Section~\ref{sec:geometry}
proves immersion for the geometrically natural low-degree map \(\Phi_s\),
whereas the positivity proof below uses the corresponding high-degree
reference map \(\Phi_0\).

\begin{proposition}[Globality, K\"ahler class and positivity]
\label{prop:positive}
Assume that the chosen section space \(V\) is basepoint free and that the source map
\(\Phi_s\) in Eq.~\eqref{eq:source-map-definition} is an immersion.  Then
Eqs.~\eqref{eq:positive-density} and
\eqref{eq:potential} define a global positive K\"ahler metric in the fixed
class \(2\pi c_1(L)\) for every value of the trainable parameters.
\end{proposition}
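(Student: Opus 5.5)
The plan has three parts: show the potential is global, show its form lies in the class \(2\pi c_1(L)\), and show the metric is strictly positive by splitting it into two pieces with the Cauchy--Schwarz inequality.

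\emph{Globality.} Write \(\Psi_\theta(x)=B_\theta v(x)\) and \(w(x)=S_0^{\otimes m}v(x)\). Both are vectors of holomorphic sections of \(L^k\) in a local frame. Define the combined holomorphic vector \(\Xi(x)=\bigl(\Psi_\theta(x),\sqrt{\varepsilon_{\rm ref}}\,w(x)\bigr)\in\CC^{2d^m}\). Then \(F_\theta=\|\Xi\|^2\), so \(K_\theta=k^{-1}\log\|\Xi\|^2\), and \(\Xi\) never vanishes. Indeed, \(\|w(x)\|^2=(s^\dagger H_0s)^m>0\) by basepoint freeness, because \(H_0>0\).

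Under a change of frame \(s\mapsto fs\) with \(f\) holomorphic and nonvanishing, \(v\mapsto f^m v\). Hence \(F_\theta\mapsto|f|^{2m}F_\theta\), and \(K_\theta\) changes by \(b^{-1}(\log f+\overline{\log f})\) locally. This is a Kähler transformation of the form \eqref{eq:kahler-transformation}. Therefore \(\omega_\theta=i\partial\bar\partial K_\theta\) is a well-defined global closed real \((1,1)\)-form.

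\emph{Class.} The same computation shows \(\omega_\theta=k^{-1}\Phi_\Xi^*\omega_{\rm FS}\) on the set where it is defined, where \(\Phi_\Xi=[\Xi]:X\to\PP^{2d^m-1}\). Alternatively, \(e^{-kK_\theta}\) defines a smooth Hermitian metric \(h_\theta\) on \(L^k\), and \(\omega_\theta=\tfrac{i}{k}\Theta(h_\theta)\) is its curvature form. Hence \([\omega_\theta]=\tfrac{2\pi}{k}c_1(L^k)=2\pi c_1(L)\), using the normalization \eqref{eq:polarization-normalization}. A cleaner check is that \(\omega_\theta-\omega_0=\tfrac{i}{k}\partial\bar\partial\log(F_\theta/F_0)\), where \(F_0=(s^\dagger H_0s)^m\). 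The ratio \(F_\theta/F_0\) is a global smooth positive function, so the two forms differ by an exact term. Since \(\omega_0=b^{-1}i\partial\bar\partial\log s^\dagger H_0 s\) represents \(2\pi c_1(L)\), so does \(\omega_\theta\).

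\emph{Positivity.} This is the main step. At a point \(x\) and a nonzero tangent vector \(\xi\), use the standard formula for a nonvanishing holomorphic vector \(\Xi\):
\[
 k\,\omega_\theta(\xi,J\xi)\propto
 \frac{\|\Xi\|^2\|\partial_\xi\Xi\|^2-|\langle\Xi,\partial_\xi\Xi\rangle|^2}{\|\Xi\|^4}\ge0 ,
\]
with equality exactly when \(\partial_\xi\Xi\) is parallel to \(\Xi\), that is, when \(\dd\Phi_\Xi(\xi)=0\). So it suffices to show that \(\Phi_\Xi\) is an immersion.

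Suppose \(\partial_\xi\Xi=\lambda\Xi\). Reading off the reference block gives \(\partial_\xi w=\lambda w\), with \(w\neq0\). This says \(\dd\Phi_0(\xi)=0\). Since \(\Phi_0=\nu_m\circ[S_0\cdot]\circ\Phi_s\) is an immersion, as stated in the text just before the proposition, it follows that \(\xi=0\).

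The one subtlety is that \(\Psi_\theta\) may vanish at some points, making \([\Psi_\theta]\) undefined there. The argument avoids this by never projectivizing \(\Psi_\theta\) separately and only using the combined vector \(\Xi\). The Cauchy--Schwarz inequality above is applied to \(\Xi\) as a whole.

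Putting the three parts together: \(g_\theta\) is positive definite everywhere, defined globally, and lies in \(2\pi c_1(L)\) for every \(\theta\).
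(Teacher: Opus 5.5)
Your proof is correct and follows essentially the same route as the paper: frame-change homogeneity gives globality, the direct-sum feature vector's Fubini--Study pullback gives the class $2\pi c_1(L)$, and strict positivity follows by turning a kernel vector of the combined map into a kernel vector of the reference map $\Phi_0$. Your Cauchy--Schwarz computation and ``reading off the reference block'' is the coordinate form of the paper's argument, which notes that the linear projection $\pi$ onto the reference summand is regular on the image and satisfies $\Phi_0=\pi\circ\Xi_\theta$.
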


\begin{proof}
Under a local frame change \(s\mapsto fs\), where \(f\) is a
nowhere-vanishing holomorphic transition function, homogeneity gives
\(F_\theta\mapsto |f|^{2m}F_\theta\).  Thus \(K_\theta\) changes by the real
part of a holomorphic function, whose mixed derivative vanishes.  Hence
the mixed derivatives agree on chart overlaps, and
\(\omega_\theta=i\,\partial\bar\partial K_\theta\) defines a global real
\((1,1)\)-form.

More explicitly, introduce the direct-sum feature vector
\begin{equation}
 Z_\theta(x)=B_\theta v(x)\mathbin\oplus
 \sqrt{\varepsilon_{\rm ref}}\,S_0^{\otimes m}v(x).
 \label{eq:direct-sum-feature-map}
\end{equation}
Then \(F_\theta=\lVert Z_\theta\rVert^2\).  Its components are local
representatives of sections of \(L^k\), and the reference block never
vanishes.  They therefore define a projective map
\(\Xi_\theta:X\to\PP(\CC^{d^m}\oplus\CC^{d^m})\), with
\begin{equation}
 \omega_\theta=\frac{1}{k}\,\Xi_\theta^*\omega_{\rm FS}.
 \label{eq:fs-pullback}
\end{equation}
Here \(\omega_{\rm FS}=i\,\partial\bar\partial\log\lVert Z\rVert^2\) is
the standard positive Fubini--Study K\"ahler form on the target projective
space, and \(\Xi_\theta^*\omega_{\rm FS}\) denotes its pullback to \(X\).
Since \(\Xi_\theta^*\cO(1)\simeq L^k\) and
\([\omega_{\rm FS}]=2\pi c_1(\cO(1))\), the normalization by \(1/k\)
gives \([\omega_\theta]=2\pi c_1(L)\), as stated in
Eq.~\eqref{eq:polarization-normalization}.

It remains to prove strict positivity.  Let \(\pi\) be the projective
linear projection onto the reference summand.  Its center is the projective
subspace on which the reference block vanishes.  Since
\(S_0^{\otimes m}v(x)\ne0\) at every point, \(\Xi_\theta(X)\) does not meet
this center, so \(\pi\) is regular on the entire image and
\(\Phi_0=\pi\circ\Xi_\theta\).  If a tangent vector
\(\xi\in T_xX\) lay in the kernel of \(\dd\Xi_\theta|_x\), then
\[
 \dd\Phi_0|_x(\xi)
 =\dd\pi|_{\Xi_\theta(x)}\bigl(\dd\Xi_\theta|_x(\xi)\bigr)=0.
\]
The assumed immersion of \(\Phi_s\), together with
Eq.~\eqref{eq:reference-projective-map}, implies that \(\Phi_0\) is an
immersion and therefore forces \(\xi=0\).  Thus \(\Xi_\theta\) is an
immersion for every \(\theta\), and
the pullback in Eq.~\eqref{eq:fs-pullback} is positive definite.
\end{proof}

The immersion of the concrete source maps used in the three geometries is
proved in Sec.~\ref{sec:geometry}.

\paragraph{Parameter count and evaluation cost.}

For \(X_{11}\) and \(X_{21}\) we use the fixed complete matrix-unit
dictionary \(Q_{ab}=E_{ab}\), where \(E_{ab}\) has a single unit entry in
position \((a,b)\) and zeros elsewhere.  Thus \(q=d^2\), and we train only the
coefficient cores.  With \(m\geq2\) sites and uniform bond dimension
\(D\), their number of real trainable parameters is
\begin{equation}
 P_{\rm fixed}=2q\left[2D+(m-2)D^2\right].
 \label{eq:fixed-count}
\end{equation}
For the truncated dictionary used on \(X_{22}\), the \(q\) shared complex
\(d\times d\) matrices are also trained, adding \(2qd^2\) real parameters:
\begin{equation}
 P_{\rm learned}=2q\left[d^2+2D+(m-2)D^2\right].
 \label{eq:learned-count}
\end{equation}
These are raw trainable-parameter counts; virtual-bond gauge freedom and
dictionary-basis redundancy are not subtracted.  At fixed \((d,q,D)\) the
counts are linear in \(m\), but this does not imply that fixed ranks suffice
at arbitrary accuracy.

Pointwise metric evaluation is linear in \(m\) for the same reason.  Using
the holomorphic direct-sum feature vector \(Z_\theta\) of
Eq.~\eqref{eq:direct-sum-feature-map}, including the reference branch, one
has \(F_\theta=Z_\theta^\dagger Z_\theta\) and
\begin{equation}
 \partial_a F_\theta=Z_\theta^\dagger\partial_a Z_\theta,
 \qquad
 \partial_a\bar\partial_{\bar b}F_\theta
 =(\partial_b Z_\theta)^\dagger\partial_a Z_\theta.
 \label{eq:first-jets-suffice}
\end{equation}
Thus only first derivatives of the holomorphic section features are needed.
Sequential contraction propagates \(F_\theta\), its first derivatives and
its mixed derivatives without forming either the complete degree-\(mb\)
section vector or the full Hermitian matrix.  Hence storage and pointwise
evaluation are \(O(m)\) at fixed local and bond dimensions.  The numerical
normalization is described in Appendix~\ref{app:pointwise-evaluation}.

Table~\ref{tab:parameter-comparison} applies this counting law to the
three geometries studied in Sec.~\ref{sec:results}, alongside the
unrestricted Hermitian matrices of the same degree.
\begin{table}[t]
\centering
% Generated deterministically from geometry and architecture data by
% scripts/generate_gcicy_tn_final_tables.py.
\begin{tabular}{lrrrrr}
\toprule
type & target degree & \(h^0(L^k)\) & \(P_{\rm TN}\) & \(P_{\rm Herm}\) & \(P_{\rm Herm}/P_{\rm TN}\) \\
\midrule
\((1,1)\) & 6 & 871 & 141,750 & 758,641 & 5.4 \\
\((2,1)\), \(D=8\) & 8 & 2,440 & 96,800 & 5,953,600 & 61.5 \\
\((2,1)\), \(D=12\) & 8 & 2,440 & 214,896 & 5,953,600 & 27.7 \\
\((2,1)\), \(D=10\) & 16 & 19,216 & 343,640 & 369,254,656 & 1,074.5 \\
\((2,1)\), \(D=12\) & 16 & 19,216 & 493,680 & 369,254,656 & 748.0 \\
\((2,2)\) & 6 & 1,852 & 47,880 & 3,429,904 & 71.6 \\
\bottomrule
\end{tabular}

\caption{Structured and unrestricted parameter counts at selected target
degrees.  The columns \(P_{\rm TN}\) and \(P_{\rm Herm}\) give the raw real
coefficient counts of the tensor network and the corresponding unrestricted
Hermitian matrix.  The table compares representation size; accuracy is
studied separately in Sec.~\ref{sec:results}.}
\label{tab:parameter-comparison}
\end{table}
The gap widens quickly with degree: \(h^0(X_{21},L^{16})=19{,}216\), so an
unrestricted Hermitian matrix at that degree would carry
\(369{,}254{,}656\) real parameters, whereas the tested TNs carry
approximately \((3.4\text{--}4.9)\times10^5\).  Thus the factorization opens
a high-degree regime in which the unrestricted coefficient matrix is no
longer practical; it does not alter the inclusion of the TN family in the
corresponding unrestricted Hermitian family.

A complex128 benchmark over \(m=3,\ldots,8\) confirms the linear pointwise
scaling, with coefficient of determination \(R^2=0.9999\); timing and
normalization details are given in Appendix~\ref{app:pointwise-evaluation}.

\section{gCICY geometry and metric evaluation}
\label{sec:geometry}

\subsection{gCICY geometry, generalized sections, and source spaces}

This section explains how gCICY geometry supplies the global
section data, tangent maps, the Calabi--Yau density \(\rho_\Omega\) obtained
from the Poincar\'e residue, and the integration weights required
by the TN parameterization above.  Let
\(A=\prod_{\alpha=1}^{N_A}\PP^{n_\alpha}\), where \(N_A\) is the number of
ambient projective factors.  An ordinary CICY of codimension \(c\) is
described by
\begin{equation}
 X_{\mathrm{CICY}}=
 \left[
 \begin{array}{c|ccc}
  \PP^{n_1}&d_1^1&\cdots&d_c^1\\
  \vdots&\vdots&&\vdots\\
  \PP^{n_{N_A}}&d_1^{N_A}&\cdots&d_c^{N_A}
 \end{array}\right],
 \qquad d_a^\alpha\in\mathbb Z_{\geq0}.
 \label{eq:general-cicy-configuration}
\end{equation}
The \(a\)-th column is the multidegree
\(\mathbf d_a=(d_a^1,\ldots,d_a^{N_A})\) of a polynomial
\(P_a\in H^0(A,\cO_A(\mathbf d_a))\), and the simultaneous zero locus of
these polynomials is the complete intersection \(X_{\mathrm{CICY}}\).

Assume that the defining sections form a regular sequence and that their
common zero locus is smooth of the expected codimension.  Its dimension is
then the first quantity below.  If the second, Calabi--Yau row-sum condition
also holds, adjunction gives a trivial canonical bundle:
\begin{equation}
 \dim_\CC X=\sum_{\alpha=1}^{N_A}n_\alpha-c,
 \qquad
 \sum_{a=1}^{c}d_a^\alpha=n_\alpha+1
 \quad (\alpha=1,\ldots,N_A).
 \label{eq:gcicy-basic-conditions}
\end{equation}

A gCICY generalizes this construction by imposing the equations in stages
\cite{anderson2016gcicy}.  In the two-stage case one first constructs
\begin{equation}
 M=
 \left[
 \begin{array}{c|ccc}
  \PP^{n_1}&a_1^1&\cdots&a_{r_A}^1\\
  \vdots&\vdots&&\vdots\\
  \PP^{n_{N_A}}&a_1^{N_A}&\cdots&a_{r_A}^{N_A}
 \end{array}\right]
 \subset A,
 \qquad a_i^\alpha\in\mathbb Z_{\geq0},
 \label{eq:general-intermediate-configuration}
\end{equation}
and then defines
\begin{equation}
 X=
 \left[
 \begin{array}{c|ccc|ccc}
  \PP^{n_1}&a_1^1&\cdots&a_{r_A}^1&b_1^1&\cdots&b_{r_B}^1\\
  \vdots&\vdots&&\vdots&\vdots&&\vdots\\
  \PP^{n_{N_A}}&a_1^{N_A}&\cdots&a_{r_A}^{N_A}
  &b_1^{N_A}&\cdots&b_{r_B}^{N_A}
 \end{array}\right]
 \subset M,
 \qquad b_j^\alpha\in\mathbb Z.
 \label{eq:general-gcicy-configuration}
\end{equation}
Here \(r_A\) and \(r_B\) are the numbers of first- and second-stage defining
sections.  The second block is imposed by defining sections
\(q_j\in H^0(M,\cO_M(b_j^1,\ldots,b_j^{N_A}))\).  The total codimension is
\(c=r_A+r_B\), and Eq.~\eqref{eq:gcicy-basic-conditions} applies to the full
list of first- and second-stage columns.
Writing \(\mathbf a_i=(a_i^1,\ldots,a_i^{N_A})\) and
\(\mathbf b_j=(b_j^1,\ldots,b_j^{N_A})\) for their multidegrees, we identify
each multidegree with the corresponding line-bundle class on \(A\).  Let
\(K_A\), \(K_M\), and \(K_X\) denote the canonical bundles of \(A\), \(M\),
and \(X\), respectively, and use additive notation for tensor products of
line bundles.  Adjunction applied stage by stage then gives
\begin{equation}
 K_M=\left(K_A+\sum_{i=1}^{r_A}\mathbf a_i\right)\!\big|_M,
 \qquad
 K_X=\left(K_A+\sum_{i=1}^{r_A}\mathbf a_i
                 +\sum_{j=1}^{r_B}\mathbf b_j\right)\!\big|_X.
 \label{eq:sequential-adjunction}
\end{equation}
Thus the row-sum condition for the complete list of columns trivializes
\(K_X\) also in the staged construction.
The essential difference is now visible in the matrices.  For a CICY, every
column has non-negative degree and comes from a polynomial on \(A\).  For a
gCICY, a later column may contain \(b_j^\alpha<0\): it is not an ambient
polynomial, but it can still define \(X\) when the corresponding line bundle
has a global section on \(M\).  Additional stages are obtained by repeating
the same construction.

Two kinds of sections enter the calculation and should not be confused.
The \emph{defining sections}, denoted below by \(p_i\) and \(q_j\), cut out
the variety; their Jacobian determines the tangent space and the
Poincar\'e residue.  The \emph{source sections}
\(s=(s_1,\ldots,s_d)^{\mathsf T}\) belong to a positive line bundle on the
already constructed threefold.  They impose no further equations and
instead provide the projective features used by the TN in
Sec.~\ref{sec:method}.

This separation is why the gCICY calculation is not a routine reuse of a
CICY or hypersurface sampler.  Negative columns require regular local
representatives on intermediate varieties, equation transitions become
lower triangular rather than purely diagonal, fibrewise root
multiplicities depend on the stage structure, and the local representatives
of \(\Omega\) and the source jets must be transported through the same
atlas.  Once these data have been computed, the same TN contraction can be applied without reference
to the particular form of the defining equations.

In this paper, we test the same metric construction on the following three
fixed gCICY threefolds.  The model \(X_{11}\) is the Hirzebruch-family member
with index \(m_{\rm H}=3\) from
Ref.~\cite{berglund2016hirzebruch}, while \(X_{22}\) is the type-\((2,2)\)
example of Ref.~\cite{anderson2016gcicy}:
\begin{align}
 X_{11}&=
 \left[
 \begin{array}{c|c|c}
  \PP^4&1&4\\
  \PP^1&3&-1
 \end{array}\right],
 &
 X_{21}&=
 \left[
 \begin{array}{c|cc|c}
  \PP^5&1&2&3\\
  \PP^1&1&2&-1
 \end{array}\right],
 \label{eq:x11-x21}\\
 X_{22}&=
 \left[
 \begin{array}{c|cc|cc}
  \PP^1&1&1&-1&1\\
  \PP^1&1&1&1&-1\\
  \PP^5&3&1&1&1
 \end{array}\right].
 \label{eq:x22}
\end{align}
The fixed integral coefficient arrays used below define connected smooth
threefolds, as verified by exact finite-field smoothness and
Koszul-cohomology calculations~\cite{reproarchive}.

\paragraph{Chartwise regular representatives and covariance of the Poincar\'e residue.}
\label{sec:representatives}

The negative entries in Eqs.~\eqref{eq:x11-x21}--\eqref{eq:x22} do not
denote polynomials on the original ambient product.  They denote sections
of line bundles on an intermediate complete intersection and must therefore
be represented separately in local line-bundle frames.  The local
frame plays the role of a basis vector for the one-dimensional fibre of the
line bundle.  After a frame is chosen, a section is represented by an
ordinary function; changing the frame multiplies that function by a
nowhere-zero transition factor.  For a gCICY there is one
additional freedom: a representative of a later equation may be changed by
adding an earlier equation, because that earlier equation already vanishes
on the intermediate variety.  Thus two local representatives need not agree
as ambient functions.  They must agree after restriction to the appropriate
intermediate variety, together with the first-order data used to define the
tangent space and the holomorphic volume form.

\paragraph{First-order equations and the conormal sheaf.}
On an affine ambient chart, let \(I_X\) be the ideal sheaf of \(X\).  The
quotient \(I_X/I_X^2\simeq N^*_{X/A}\) records the first-order parts of the
defining equations.  For the smooth embeddings considered here it is the
conormal bundle and its local bases are supplied by any ordered regular
sequence defining \(X\).

Now let \(U,V\) be overlapping ambient charts and let
\[
 F_U=(f_{U,1},\ldots,f_{U,c})^{\mathsf T},\qquad
 F_V=(f_{V,1},\ldots,f_{V,c})^{\mathsf T}
\]
be ordered local defining equations for the same embedding
\(X\hookrightarrow A\).  They are regular sequences; in the present smooth
setting this means that their differentials are independent along \(X\).
Consequently the classes \([f_{U,1}],\ldots,[f_{U,c}]\) and
\([f_{V,1}],\ldots,[f_{V,c}]\) are two local frames of the same rank-\(c\)
locally free sheaf \(I_X/I_X^2\simeq N^*_{X/A}\).  They are related by a
matrix of regular functions whose determinant is nonzero on the overlap, so
\begin{equation}
 [F_V]=A_{VU}[F_U]\quad\text{in }I_X/I_X^2.
 \label{eq:conormal-change}
\end{equation}
Choosing the local defining equations in an order adapted to the stage
filtration makes \(A_{VU}\) block lower triangular.  The equations defining
an earlier intermediate variety are already local frames of its ideal and
cannot acquire a component from an equation imposed only at a later stage.
A later representative may, conversely, be changed by an earlier equation.
For one equation at each stage this gives
\((p_V,q_V)=(a\,p_U,b\,p_U+\gamma\,q_U)\) and hence
\(A_{VU}=\left(\begin{smallmatrix}a&0\\ b&\gamma\end{smallmatrix}\right)\);
the same statement holds blockwise.  Its diagonal blocks are the
nonvanishing changes of line-bundle frame, while an off-diagonal block
records the freedom to add an earlier defining equation to a later
generalized representative.  Equivalently,
\(F_V=A_{VU}F_U+R\) with \(R\in I_X^2\): the lower-triangular freedom
changes the displayed equations, but not the conormal space or the
intrinsic tangent space.
Lemma~\ref{lem:residue-covariance} in Appendix~\ref{app:glue} turns this
statement into the transformation law of the Poincar\'e residue,
\(\Omega_V=J_{VU}(\det A_{VU})^{-1}\Omega_U\) with
\(J_{VU}=\det(\partial z_V/\partial z_U)\).  By
Eq.~\eqref{eq:sequential-adjunction} this combined factor is precisely
the transition of \(K_X\), which the row-sum condition trivializes.  The
local Poincar\'e residues are therefore coordinate representatives of one
global nowhere-zero three-form \(\Omega\); the chart-by-chart verification for
the three geometries is recorded in Appendix~\ref{app:glue}.
Writing the global three-form in intrinsic coordinates as
\(\Omega=h_u\,\dd u^1\wedge\dd u^2\wedge\dd u^3\), an intrinsic recharting
\(u\mapsto u'\) gives
\begin{equation}
 h_{u'}=h_u\det\!\left(\frac{\partial u}{\partial u'}\right),
 \qquad
 \rho_{\Omega,u'}=\rho_{\Omega,u}
 \left|\det\!\left(\frac{\partial u}{\partial u'}\right)\right|^2,
 \label{eq:residue-density-change}
\end{equation}
where \(\rho_\Omega=|h_u|^2\).  After the projective-frame transition has
been accounted for as above, changing intrinsic coordinates is the only
remaining transformation.  The Hermitian metric determinant has the same
squared intrinsic Jacobian factor as \(\rho_\Omega\).  Therefore
\(\eta=\det g/\rho_\Omega\), the coordinate-invariant Monge--Amp\`ere
function introduced in Eq.~\eqref{eq:continuous-ma-ratio}, is independent
of both the projective chart and the nonsingular choice of implicit
coordinates.  The same line-bundle transition factors enter the proposal
density and cancel consistently in the normalized importance weights.  The sample-normalized ratio
\(r\), rather than \(\eta\) itself, is defined in Sec.~\ref{sec:training}.

On the three geometries this construction is carried out explicitly:
each generalized equation is realized by a pair of chartwise regular
representatives whose overlap difference is a combination of the earlier
equations---exactly the lower-triangular freedom of
Eq.~\eqref{eq:conormal-change}.  The explicit representatives and their
algebraic or cohomological derivations are given in
Appendices~\ref{app:h11}--\ref{app:h22}.

\paragraph{Source spaces and immersion.}
\label{sec:source-immersion}

For each geometry, the metric construction uses a basis
\(s=(s_1,\ldots,s_d)^{\mathsf T}\) of a low-degree section space
\(V_b=H^0(X,L^b)\).  The choices used in the numerical calculations are
\begin{center}
\begin{tabular}{c|c|c|c|c}
geometry & polarization \(L\) & source multidegree & \(b\) &
\(d=h^0(X,L^b)\)\\
\hline
\(X_{11}\) & \(\cO_X(1,1)\)   & \((2,2)\)     & \(2\) & \(45\)\\
\(X_{21}\) & \(\cO_X(1,1)\)   & \((1,1)\)     & \(1\) & \(11\)\\
\(X_{22}\) & \(\cO_X(1,1,1)\) & \((1,1,1)\) & \(1\) & \(17\)
\end{tabular}
\end{center}
For \(X_{21}\) and \(X_{22}\), the degree-one ambient monomials already
span the required restricted section spaces.  For \(X_{11}\), their
degree-one restriction has rank \(10<h^0(X_{11},L)=11\) and misses one
generalized section.  We therefore use the first complete ambient
restriction, at degree two, for which \(h^0(X_{11},L^2)=45\).  This
degree is chosen to obtain a complete source representation; the
degree-one subsystem would already suffice for immersion.

Threefold Riemann--Roch and Kodaira vanishing give
\begin{align}
 h^0(X_{11},L^k)&=\frac{23k^3+43k}{6},&
 h^0(X_{21},L^k)&=\frac{28k^3+38k}{6},&
 h^0(X_{22},L^k)&=\frac{50k^3+52k}{6}.
 \label{eq:rr}
\end{align}
The intersection-theoretic derivation is given in
Appendix~\ref{app:rr}.  These formulas determine both the source
dimensions above and the sizes of the corresponding unrestricted Hermitian
parameter spaces.

\begin{proposition}[Immersion of the source maps]
\label{prop:headline-immersion}
The source vectors used for \(X_{11}\), \(X_{21}\), and \(X_{22}\)
define closed immersions
\[
 \Phi_s:X\longrightarrow\PP^{d-1},
 \qquad
 x\longmapsto[s_1(x):\cdots:s_d(x)].
\]
\end{proposition}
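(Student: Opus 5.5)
The plan is to reduce each case to a statement about the ambient Segre--Veronese embedding, and then to handle by hand the one point where ambient data do not suffice.

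\textbf{Reduction for \(X_{21}\) and \(X_{22}\).} Here the source sections are restrictions of the degree-one ambient monomials. These span all of \(H^0(A,\cO_A(1,1))\), respectively \(H^0(A,\cO_A(1,1,1))\). Therefore \(\Phi_s\) is, up to a linear projection, the restriction to \(X\) of the Segre embedding \(\sigma:A\hookrightarrow\PP^{N}\). The projection is the one induced by the restriction map \(H^0(A,\cO_A(\mathbf 1))\to H^0(X,L)\), which the table asserts is surjective. If this map has a kernel, its elements vanish on \(X\), so no point of \(\sigma(X)\) lies on the centre of the projection. The composite \(X\hookrightarrow A\to\sigma(A)\subset\PP^N\) is a closed immersion, because it is a closed subvariety followed by a projective embedding. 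The projection from a linear space \(\Lambda\) cut out by linear forms vanishing on \(\sigma(X)\) is the identity on the linear span of \(\sigma(X)\). So \(\Phi_s\) coincides with \(\sigma|_X\) composed with a linear isomorphism of that span. Basepoint freeness comes for free, since the Segre coordinates never vanish simultaneously.

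\textbf{Reduction for \(X_{11}\).} The degree-two space contains the restrictions of all ambient monomials of bidegree \((2,2)\). By the same argument, these already define the closed embedding \(X\hookrightarrow A\xrightarrow{\text{Segre--Veronese}}\PP\). Adding the one extra generalized section as a further coordinate does not destroy this. The graph-type argument applies: if \(\Phi'=[\phi:t]\) and \(\phi\) is a closed immersion, then \(\Phi'\) is injective and injective on tangent vectors, because \(\phi\) factors through \(\Phi'\) by a linear projection that is regular on the image. Being proper, \(\Phi'\) is therefore a closed immersion. This is exactly the projection trick used in the proof of Proposition~\ref{prop:positive}. The paper's remark that the degree-one subsystem already suffices is consistent with this, since the rank-10 ambient part at degree one is itself the Segre embedding of \(A\) restricted to \(X\).

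\textbf{The main obstacle.} The delicate point is justifying that the ambient restriction actually lands in \(V_b\) with the stated ranks, and that the extra generalized section in \(X_{11}\) is a regular section of \(L^2\) on all of \(X\). That section is only defined chartwise through the negative-column representatives. I would first verify the ranks by the Koszul/finite-field computations of Appendices~\ref{app:h11}--\ref{app:h22}. Then I would check on overlaps that the chartwise representatives of the generalized section differ by the lower-triangular freedom, so they glue. Once these are in hand, the immersion claim is purely formal: a closed subvariety of a Segre--Veronese-embedded product, together with any extra coordinates.
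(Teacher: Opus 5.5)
Your argument is correct, and for \(X_{21}\) and \(X_{22}\) it is the paper's argument. Once the source vectors are known to be bases of \(H^0(X,L^b)\), \(\Phi_s\) is the restricted Segre map composed with a linear isomorphism of the span of \(\sigma(X)\). The paper establishes the basis property from exact finite-field minors compared with the Riemann--Roch dimensions (Appendix~\ref{app:immersion}), not from the representative appendices you cite. A minor wording point: the span of \(\sigma(X)\), not the centre, is what the forms vanishing on \(\sigma(X)\) cut out; the centre is a complementary linear space disjoint from it.

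Your treatment of \(X_{11}\) rests on a misreading: at \(b=2\) there is no generalized coordinate. We have \(h^0(\PP^4\times\PP^1,\cO(2,2))=15\cdot 3=45=h^0(X_{11},L^2)\), and the rank-\(45\) witness shows that the restriction is injective. Hence all \(45\) source sections are restricted ambient monomials. This is why the paper calls degree two ``the first complete ambient restriction'', and \(X_{11}\) is covered by the same Segre--Veronese argument as the other two geometries. Your projection/graph step for an extra coordinate is therefore vacuous here. It is nonetheless correct in general, and it is exactly what would justify the paper's side remark that the degree-one system (ten ambient sections plus one generalized section) already gives an immersion. The gluing check you single out as the main obstacle does not arise for the source vector actually used.
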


\begin{proof}
The exact finite-field minors recorded in
Appendix~\ref{app:immersion} prove that the implemented source vectors
have ranks \(45\), \(11\), and \(17\).  Equation~\eqref{eq:rr} gives
the matching dimensions of the complete restricted section spaces, so
these vectors are bases rather than proper subsets.

The corresponding ambient line bundles are
\[
 \cO_{\PP^4\times\PP^1}(2,2),\qquad
 \cO_{\PP^5\times\PP^1}(1,1),\qquad
 \cO_{\PP^1\times\PP^1\times\PP^5}(1,1,1).
\]
They are very ample and define Segre--Veronese or Segre closed
embeddings of the ambient products.  Restricting these embeddings to
\(X\) preserves a closed immersion.  Removing linear relations that
vanish identically on \(X\) merely replaces the projective span of the
image by an isomorphic projective coordinate space.  The source maps
defined by the chosen bases are therefore closed immersions.
\end{proof}

\begin{corollary}
\label{cor:positive-families}
For every reported \(X_{11}\), \(X_{21}\), and \(X_{22}\) TN family,
Proposition~\ref{prop:positive} applies: the resulting K\"ahler metric
is globally positive for every value of the trainable parameters.
\end{corollary}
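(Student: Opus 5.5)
The corollary follows by checking the two hypotheses of Proposition~\ref{prop:positive}, basepoint freeness of \(V\) and immersion of \(\Phi_s\), for each reported family, and then invoking that proposition. It applies uniformly to every value of the trainable parameters. The plan is to extract both hypotheses from Proposition~\ref{prop:headline-immersion}, and then confirm that each reported family is an instance of the ansatz \eqref{eq:implicit-H}--\eqref{eq:potential}.

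\emph{Immersion.} Proposition~\ref{prop:headline-immersion} states that the source vectors for \(X_{11}\), \(X_{21}\), \(X_{22}\) define closed immersions \(\Phi_s:X\to\PP^{d-1}\). A closed immersion is in particular an immersion in the differential sense used before Proposition~\ref{prop:positive}: the differential \(\dd\Phi_s|_x\) is injective at every point. So this hypothesis is immediate.

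\emph{Basepoint freeness.} I would argue that this is implicit in Proposition~\ref{prop:headline-immersion}, since a morphism \(X\to\PP^{d-1}\) given by \([s_1:\cdots:s_d]\) is defined only where some \(s_i\) is nonzero. To avoid relying on that convention, I would give the concrete argument. Each source space is the restriction of a very ample ambient bundle, namely \(\cO(2,2)\), \(\cO(1,1)\), or \(\cO(1,1,1)\) on the respective product, and its monomials have no common zero on the ambient product. Their restrictions therefore have no common zero on \(X\). The chosen basis spans the same space as these restrictions, because the ranks \(45\), \(11\), \(17\) match \eqref{eq:rr}. Hence \(V\) is basepoint free.

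\emph{Compatibility of each family with the ansatz.} Every reported TN family uses \(F_\theta\) as in \eqref{eq:positive-density}, with a fixed \(H_0>0\) on this same \(V\) and \(0<\varepsilon_{\rm ref}<1\). This holds with either the complete matrix-unit dictionary or the learned truncated dictionary on \(X_{22}\). Proposition~\ref{prop:positive} places no constraint on \(B_\theta\), the bond dimensions, \(m\), or the dictionary, so learned \(Q_\alpha\) are harmless. The only point needing care, and the one I expect to be the main obstacle, is bookkeeping: each reported run, including continuations and re-embeddings at higher \(m\), must genuinely keep a strictly positive reference block \(\varepsilon_{\rm ref}H_0^{\otimes m}\) built from the same source basis. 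With that confirmed, Proposition~\ref{prop:positive} yields a global positive Kähler metric in \(2\pi c_1(L)\) for all \(\theta\).
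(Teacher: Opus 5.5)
Your proposal is correct and follows the paper's own route. The corollary amounts to feeding Proposition~\ref{prop:headline-immersion} into Proposition~\ref{prop:positive}: closed immersions are everywhere-defined morphisms with injective differential, so they give both base-point freeness and immersion. Proposition~\ref{prop:positive} does not depend on \(B_\theta\), the bond dimensions, \(m\), the dictionary, or surjectivity of \(\mu_m\). Your explicit base-point argument via restricted ambient monomials, and your check that each run keeps a fixed positive reference block \(\varepsilon_{\rm ref}H_0^{\otimes m}\), are sound elaborations of steps the paper leaves implicit.
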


\subsection{Sampling and pointwise metric evaluation}
\label{sec:sampling}

The numerical loss and error measures require integrals with respect to
the Calabi--Yau volume form
\[
 \dd\mu_\Omega=i\,\Omega\wedge\bar\Omega.
\]
Directly generating independent points from this measure is not convenient.
We instead construct points using the fibrations of the three gCICYs and
then correct the resulting sampling measure by importance weights.  This
adapts the random-linear-section sampling used for projective
Calabi--Yau hypersurfaces and complete intersections in
Refs.~\cite{douglas2006,braun2008} to the fibre structure of the three
gCICYs.

\paragraph{Complete fibre intersections.}

Let
\[
 \pi:X\longrightarrow B
\]
denote the relevant fibration.  A single sampling step begins by drawing
a point \(\beta\in B\) from the Fubini--Study measure on the projective base.
Substituting the coordinates of \(\beta\) into the defining equations leaves
a positive-dimensional projective fibre
\[
 X_\beta=\pi^{-1}(\beta).
\]
We then choose a Haar-random complementary projective subspace
\(\Lambda\) in the projective space containing the fibre.  Its dimension
is chosen so that \(X_\beta\cap\Lambda\) is zero dimensional.  All isolated
points of this transverse intersection are included together as one
sampling cluster.

The construction is explicit in the three geometries:
\begin{center}
\begin{tabular}{c|c|c|c|c}
geometry & base \(B\) & fibre \(X_\beta\) &
random subspace \(\Lambda\) & points per cluster\\
\hline
\(X_{11}\) &
\(\PP^1\) &
quartic surface in \(\PP^3\) &
\(\PP^1\) &
\(4\)\\
\(X_{21}\) &
\(\PP^1\) &
\((2,3)\) surface in \(\PP^4\) &
\(\PP^2\) &
\(6\)\\
\(X_{22}\) &
\(\PP^1\times\PP^1\) &
plane cubic in \(\PP^2\) &
\(\PP^1\) &
\(3\)
\end{tabular}
\end{center}
For \(X_{11}\), fixing the \(\PP^1\) coordinate leaves a linear equation
in \(\PP^4\), hence a \(\PP^3\), together with a quartic equation on that
\(\PP^3\).  A generic projective line meets the resulting quartic surface
in four points.  For \(X_{21}\), fixing the base coordinate leaves a
quadratic and a cubic equation in \(\PP^4\); a generic projective plane
meets this complete-intersection surface in \(2\cdot3=6\) points.  For
\(X_{22}\), fixing the two \(\PP^1\) coordinates leaves three linear
conditions in \(\PP^5\), which determine a \(\PP^2\), and one cubic
equation on that plane.  A generic line therefore gives three points.
Thus the cluster sizes are
\begin{equation}
 c_{11}=4,\qquad c_{21}=6,\qquad c_{22}=3.
 \label{eq:cluster-sizes}
\end{equation}

The complex-projective kinematic, or Crofton, formula
\cite{howard1993} implies that averaging the complete intersections
\(X_\beta\cap\Lambda\) over Haar-random \(\Lambda\) produces the
Fubini--Study measure on the fibre.  It is important here to include all
intersection points: the Crofton identity applies to the sum over the
complete intersection, not to a root selected by a numerical preference.
Together with the independent Fubini--Study sampling on the base, this
defines a proposal measure \(\dd\nu_{\rm prop}\) on \(X\).  Up to
family-dependent constants, which cancel after normalization, its volume
forms can be written as follows.  Label the ambient projective factors by
their homogeneous-coordinate groups \(\alpha=x,y,z\), with \(z\) absent in
the two-factor examples.  Let
\(\pi_\alpha:A\to\PP^{n_\alpha}\) be projection onto the corresponding
factor and let \(\iota:X\hookrightarrow A\) be the inclusion.  If
\(\omega_{\rm FS}^{(\alpha)}\) is the Fubini--Study form on that factor, set
\[
 \omega_\alpha=(\pi_\alpha\circ\iota)^*
 \omega_{\rm FS}^{(\alpha)} .
\]
Then
\begin{equation}
 \dd\nu_{\rm prop}\ \propto\
 \begin{cases}
  \omega_y\wedge\omega_x^2/2,
    &X_{11},X_{21},\\[1mm]
  \omega_x\wedge\omega_y\wedge\omega_z,
    &X_{22},
 \end{cases}
 \label{eq:proposal-forms}
\end{equation}
The Crofton derivation and the fibre intersection classes are given in
Appendix~\ref{app:crofton}.

\paragraph{Conversion to the Calabi--Yau measure.}

The proposal measure generated above is generally not the Calabi--Yau
measure.  In local intrinsic coordinates
\(u=(u^1,u^2,u^3)\), write the two volume forms using the same coordinate
volume convention:
\[
 \dd\nu_{\rm prop}
 =\rho_{{\rm prop},u}\,\dd\lambda_u,
 \qquad
 \dd\mu_\Omega
 =\rho_{\Omega,u}\,\dd\lambda_u.
\]
Since both measures are expressed relative to the same local coordinate
volume element, their density ratio is
\[
 \frac{\dd\mu_\Omega}{\dd\nu_{\rm prop}}
 =
 \frac{\rho_{\Omega,u}}{\rho_{{\rm prop},u}}.
\]
For sampled points \(x_i\), let \(u^{(i)}\) denote the intrinsic coordinate
system used at \(x_i\).  We use the self-normalized importance weights
\begin{equation}
 w_i=
 \frac{\rho_{\Omega,u^{(i)}}(x_i)/\rho_{{\rm prop},u^{(i)}}(x_i)}
 {\displaystyle
  \sum_j\rho_{\Omega,u^{(j)}}(x_j)/\rho_{{\rm prop},u^{(j)}}(x_j)},
 \qquad
 \sum_iw_i=1.
 \label{eq:importance-weight}
\end{equation}
Under a change of intrinsic coordinates, both local densities acquire the
same squared Jacobian factor.  Their ratio, and hence \(w_i\), is therefore
coordinate independent.  The unknown overall normalization of the
Crofton measure also cancels in Eq.~\eqref{eq:importance-weight}, as does
the normalization freedom \(\Omega\mapsto c\Omega\).

For any integrable function \(f\), the weighted estimator satisfies
\begin{equation}
 \frac{\displaystyle\int_X f\,\dd\mu_\Omega}
      {\displaystyle\int_X\dd\mu_\Omega}
 \simeq
 \sum_iw_i f(x_i).
 \label{eq:importance-mean}
\end{equation}
Only if the proposal measure already equals the normalized
Calabi--Yau measure do the weights reduce to \(w_i=1/N_{\rm pt}\), where
\(N_{\rm pt}\) is the number of sampled points.

The ratio \(\rho_\Omega/\rho_{\rm prop}\) in
Eq.~\eqref{eq:importance-weight} corrects the sampling distribution.  It
should not be confused with the Monge--Amp\`ere ratio
\[
 \eta_g=\frac{\det g}{\rho_\Omega},
\]
which compares the volume form of a candidate metric with the
Calabi--Yau volume form.  Combining the two ratios gives
\begin{equation}
 \sum_iw_i\eta_g(x_i)
 \simeq
 \frac{\displaystyle\int_X\omega_g^3/3!}
      {\displaystyle\int_X\dd\mu_\Omega}
 =\kappa.
 \label{eq:kappa-importance-estimate}
\end{equation}
The sample-normalized Monge--Amp\`ere residual used for training and
evaluation is defined in Sec.~\ref{sec:training}.

\paragraph{Acceptance and statistical dependence.}

A fibre intersection contributes to the sample only when all expected roots
satisfy the defining equations and admit nonsingular intrinsic coordinates;
all of its roots are then included.  The importance weights and
Monge--Amp\`ere residuals play no role in this decision.  Roots from the same
intersection share a base point and a random projective subspace, so
uncertainty estimates resample complete intersections rather than treating
their roots as independent.  Numerical tolerances and sample diagnostics are
given in Appendix~\ref{app:crofton}.

\paragraph{Pointwise tangent, section-jet and Poincar\'e-residue computation.}
We spell out the calculation performed after a sampled homogeneous root is
obtained.  In each projective factor, choose a nonzero homogeneous coordinate,
set it equal to one, and let \(z=(z^1,\ldots,z^n)\) denote all remaining
ambient affine coordinates.
On a chart \(U\), we evaluate the regular local representatives of all
defining sections \(p_i\) and \(q_j\), using the explicit formulas in
Appendices~\ref{app:h11}--\ref{app:h22}, and collect them in the ordered
vector
\[
 F_U(z)=0
\]
of local defining equations.  We then form its \(c\times n\) Jacobian
\begin{equation}
 J=\frac{\partial F_U}{\partial z}.
 \label{eq:local-equation-jacobian}
\end{equation}
Here \(c=n-3\), since \(X\) is a threefold.

Let \(\mathsf D\) index \(c\) dependent coordinates \(v=z_{\mathsf D}\), and
let \(\mathsf I\) index the remaining three coordinates
\(u=z_{\mathsf I}\), which serve as intrinsic coordinates.  With
\[
 J_{\mathsf D}=\frac{\partial F_U}{\partial v},\qquad
 J_{\mathsf I}=\frac{\partial F_U}{\partial u},
\]
the implicit-function theorem gives
\begin{equation}
 \frac{\partial v}{\partial u}=-J_{\mathsf D}^{-1}J_{\mathsf I}.
 \label{eq:implicit-tangent-solve}
\end{equation}
Equivalently, the ambient tangent map \(T=\partial z/\partial u\) is
\begin{equation}
 T_{\mathsf I}=\mathbf 1_3,\qquad
 T_{\mathsf D}=-J_{\mathsf D}^{-1}J_{\mathsf I},
 \qquad J T=0.
 \label{eq:ambient-tangent-map}
\end{equation}
For each choice of dependent coordinates, let
\(\sigma_{\min}(J_{\mathsf D})\) and \(\sigma_{\max}(J_{\mathsf D})\) denote the smallest and
largest singular values of \(J_{\mathsf D}\), and define its spectral condition number
by
\[
 \kappa_2(J_{\mathsf D})=
 \frac{\sigma_{\max}(J_{\mathsf D})}{\sigma_{\min}(J_{\mathsf D})}.
\]
Among the nonsingular choices of dependent coordinates, we select the block
\(J_{\mathsf D}\) with the smallest spectral condition number.  The numerical
nonsingularity threshold and the chart-consistency tests are reported in
Appendix~\ref{app:globaldata}.

Every ambient section derivative is then pulled back with the same tangent
map:
\begin{equation}
 \frac{\partial s_U}{\partial u^a}
 =\sum_{\mu=1}^{n}
   T_{\mu a}\frac{\partial s_U}{\partial z^\mu}.
 \label{eq:intrinsic-section-jet}
\end{equation}
These values and first derivatives are the section jets supplied to the TN
contraction in Sec.~\ref{sec:method}.  Likewise, the Poincar\'e-residue
construction \cite{griffiths1969periods,braun2008} gives, in the same
dehomogenized frame,
\begin{equation}
 \Omega_U=h_u\,\dd u^1\wedge\dd u^2\wedge\dd u^3,
 \qquad
 \rho_{\Omega,U}=|h_u|^2=\frac{1}{|\det J_{\mathsf D}|^2}.
 \label{eq:computational-residue}
\end{equation}
The fixed projective and equation-frame factors are already included in
the chosen local representatives; the ordering-dependent sign of \(h_u\)
drops out of \(\rho_{\Omega,U}\).

At each sampled point the geometric construction provides the section values
and first derivatives, the Calabi--Yau and proposal densities, the importance
weight and the tangent map.  Together with the fixed or learned local
operators \(Q_\alpha\), these data determine the feature derivatives
\begin{equation}
 \partial_a(s_U^\dagger Q_\alpha s_U),\qquad
 \partial_a\bar\partial_{\bar b}
 (s_U^\dagger Q_\alpha s_U),
 \label{eq:feature-jets}
\end{equation}
and combines them into
\((F_\theta,\partial F_\theta,\partial\bar\partial F_\theta)\).  Thus the atlas, tangent solve,
source jets, the local coefficient \(h_u\) of \(\Omega\), and the integration
measure are determined entirely by the
geometric construction; the TN enters only through the algebraic
contraction and metric derivatives.

The TN contraction and the transition and residue formalism are common to
the three examples.  Applying them to a new gCICY still requires regular
local representatives, a source-section basis and a suitable sampling
construction for that geometry.

The covariance derived in
Eqs.~\eqref{eq:conormal-change}--\eqref{eq:residue-density-change} has a
direct numerical consequence: changing the projective chart or the
nonsingular block \(J_{\mathsf D}\) transforms \(s_U\), \(T\) and \(h_u\)
but leaves the projective source map, \(\det g/\rho_\Omega\) and the
normalized importance weight unchanged.  Atlas gluing is therefore part of
the geometric construction rather than the TN ansatz.  The explicit local
representatives, residue covariance, sampling measure, source-space ranks and
numerical chart comparisons are collected in
Appendices~\ref{app:members}--\ref{app:immersion}.

\section{Numerical results for three gCICY threefolds}
\label{sec:results}

In this section we put the construction in the last section to work.  Section~\ref{sec:training} introduces how
we measure the error of a computed metric, what each model minimizes, and
how two metrics are allowed
to be compared; the tests on three geometries then follow, one at a time.  On
\(X_{11}\) we train three refinements of one common degree-two
metric---the TN, a parameter-matched neural correction reading the same
section data, and an unrestricted \(H_6\) metric---and compare all
nine runs on a single final sample.  On \(X_{21}\) we ask how much
accuracy a degree-eight TN gains over the degree-four unrestricted Hermitian metric
while using fewer parameters, and whether raising the bond dimension or
the degree reproducibly helps.  On \(X_{22}\) we run the same construction
on a two-stage geometry, where an equal-update comparison identifies the
truncated local operator dictionary as the limiting component at fixed
bond dimension.  The reported errors are computed with common pointwise
weights within each comparison and are not compared across geometries.

\subsection{Error measures, loss functions, and comparison design}
\label{sec:training}

\paragraph{Error measures.}

All models are judged by how far their volume ratio departs from that of
the Ricci-flat metric; the loss minimized during training is built from
the same ratio and is specified below.  In local intrinsic coordinates,
define
\begin{equation}
 \eta_\theta(x)=\frac{\det g_\theta(x)}{\rho_\Omega(x)},\qquad
 r_\theta(x_i)=\frac{\eta_\theta(x_i)}
 {\sum_jw_j\eta_\theta(x_j)}.
 \label{eq:ratio}
\end{equation}
For brevity, write \(r_i=r_\theta(x_i)\).
The normalization enforces \(\sum_iw_ir_i=1\) on the evaluation sample.
It removes one overall scale but does not enforce \(r_i=1\) pointwise.
For the exact Ricci-flat metric \(r\equiv1\) identically, so the
deviation of \(r_i\) from one measures the pointwise mismatch between the
model volume form and the Calabi--Yau volume form.  All statistics below
are therefore evaluated with respect to the normalized Calabi--Yau measure
\(\dd\mu_\Omega/\int_X\dd\mu_\Omega\), rather than by assigning equal
weight to the sampled points.  We report
\begin{align}
 \sigma&=\sum_iw_i|1-r_i|,\label{eq:sigma}\\
 \chi&=\left[\sum_iw_i(1-r_i)^2\right]^{1/2},\label{eq:chi}\\
 a_i&=|\log r_i|,\qquad
 u_i=(\log r_i)_+,\qquad
 d_i=(-\log r_i)_+,\label{eq:bilateral-tail}\\
 Q^{\rm abs}_{0.999}&=
 \min\Bigl\{t:\textstyle\sum_{i:\,a_i\le t}w_i\ge0.999\Bigr\},
 \label{eq:q999}\\
 \CVaR^{\rm abs}_{1\%}&=
 \frac{1}{\alpha}\left(
 \sum_{j<J}w_{(j)}a_{(j)}+\delta a_{(J)}\right),
 \qquad \alpha=0.01.
 \label{eq:cvar}
\end{align}
Here \((x)_+=\max\{x,0\}\).
The first two are bulk averages of the volume-ratio error.  The
logarithmic variables treat over- and underestimation symmetrically and
separate the two failure directions: \(u\) records local volume excess
(\(r>1\)) and \(d\) local volume deficit (\(r<1\)), the direction in which
the metric approaches degeneracy; both tails are therefore reported.
Equation~\eqref{eq:q999} is the weighted \(0.999\)-quantile of \(a\):
outside a region carrying \(0.1\%\) of the manifold volume, the local
volume-form mismatch is at most a factor \(e^{Q^{\rm abs}_{0.999}}\).
For the last definition, the indices are ordered so that
\(a_{(1)}\geq a_{(2)}\geq\cdots\), \(J\) is the first index for which
\(\sum_{j\leq J}w_{(j)}\geq\alpha\), and
\(\delta=\alpha-\sum_{j<J}w_{(j)}\).  Thus the boundary observation is
included fractionally when its weight crosses the one-percent threshold.
The resulting \(\CVaR^{\rm abs}_{1\%}\) is exactly the weighted mean of
\(a\) over the worst one percent of the manifold volume; the
quantile locates the tail, the CVaR measures its average depth, and the
same summaries with \(a\) replaced by \(u\) and \(d\) resolve the two
sides.  All results in this paper use this two-sided convention.  We
additionally report the observed range of \(r\) and the minimum metric
eigenvalue; the latter is computed in the deterministic implicit chart of
each point; its positivity is chart-independent, while its magnitude
depends on that chart and is an implementation check rather than a
geometric invariant.

Finally, two effective sample sizes record how much statistical power the
weighted sample actually carries.  The importance effective sample size
\begin{equation}
 \ESS=\frac{1}{\sum_iw_i^2}
 \label{eq:ess}
\end{equation}
is the number of equally weighted draws to which the sample is
equivalent: it equals the number of points for uniform weights and
decreases as the weights become uneven.  When the sampled points are
grouped into complete fibres \(C\), with total cluster weight
\(W_C=\sum_{i\in C}w_i\), the corresponding count of independent fibres is
\begin{equation}
 \ESS_{\rm fibre}=\frac{1}{\sum_C W_C^2}.
 \label{eq:fibre-ess}
\end{equation}
Because the several points of one fibre share a base point and are
correlated, \(\ESS_{\rm fibre}\), rather than \(\ESS\), controls the
resolution of the tail statistics: a \(0.999\) quantile rests on roughly
\(10^{-3}\,\ESS_{\rm fibre}\) independent clusters.

\paragraph{Loss functions.}

All reported TN optimizations and the unrestricted \(H_6\) runs in the
\(X_{11}\) comparison minimize the native geometric loss
\begin{equation}
 \mathcal L_{\rm native}
 =\mathcal L_{\log\MA}+\mathcal L_{\MA}
 +0.1\,\mathcal L_{\rm tail}.
 \label{eq:native-loss}
\end{equation}
The \(\phi\)-model uses the same geometric terms together with two
architecture-specific stabilizers, specified below.

Explicitly,
\begin{align}
 \widetilde\ell_i&=\log\det g_\theta(x_i)-\log\rho_\Omega(x_i)-\log\kappa_0,
 &\widetilde r_i&=e^{\widetilde\ell_i},\nonumber\\
 \mathcal L_{\log\MA}&=\sum_iw_i\widetilde\ell_i^2,
 &\mathcal L_{\MA}&=\sum_iw_i(\widetilde r_i-1)^2.
 \label{eq:native-components}
\end{align}
Equations~\eqref{eq:native-components} define the objective on the complete
training sample.  During stochastic optimization, the scripts restrict the
weights to the current minibatch and renormalize them to sum to one; the tail
mean is likewise evaluated within that minibatch.  This is a self-normalized
minibatch approximation, not an exactly unbiased stochastic gradient of the
complete-sample weighted sum.  The same convention is used within each
matched comparison.  Model selection and all reported observables are instead
computed on complete held-out samples with their globally normalized
importance weights.
Here the volume normalization \(\kappa_0\) is fixed once, from the initial
model evaluated on the training sample, so the training target does not
move.  The logarithmic term penalizes volume excess and deficit
symmetrically; the ratio term is the training counterpart of the reported
\(\chi\).
The tilded quantities are the fixed-normalization training quantities; in
particular, \(\widetilde r_i\) need not average to one on a later sample.  By
contrast, every final metric reported in
Eqs.~\eqref{eq:ratio}--\eqref{eq:chi} uses the untilded ratio \(r_i\),
renormalized on that evaluation sample.
The tail term targets rare spikes that the averages can hide.  With
\(r_\star=1.5\), \(\tau=0.05\) and tail mass
\(\beta=0.02\), it is
\[
 \mathcal L_{\rm tail}
 =\CVaR_\beta\!\left[
 \left(
 \tau\log\!\left(1+\exp\frac{\widetilde\ell_i-\log r_\star}{\tau}\right)
 \right)^{\!2}\,\right].
\]
Here \(\CVaR_\beta\) is the weighted tail mean of
Eq.~\eqref{eq:cvar}, with tail mass \(\beta\) in place of \(1\%\) and
applied to the bracketed quantity, a smoothed hinge that activates only
where the ratio exceeds \(r_\star\).  This one-sided training penalty is
distinct from the two-sided evaluation in
Eqs.~\eqref{eq:bilateral-tail}--\eqref{eq:cvar}.
Only when \(e^{\widetilde\ell_i}\) is formed inside the training loss, we
replace \(\widetilde\ell_i\) by
\(\min\{b,\max\{a,\widetilde\ell_i\}\}\).  We use
\((a,b)=(-20,20)\) for the TN and unrestricted Hermitian models and
\((-15,10)\) for the \(\phi\)-model.  This prevents numerical overflow; it
does not truncate \(\mathcal L_{\log\MA}\), and all reported observables use
the unrestricted ratio.
The \(\phi\)-model's positivity is not structural---a scalar correction
can drive a metric eigenvalue through zero---so its loss adds the two
stabilizers:
\begin{equation}
 \mathcal L_{\phi}=\mathcal L_{\rm native}
 +20\,\mathcal L_{\rm pos}+10^{-5}\mathcal L_{\rm corr}.
 \label{eq:phi-stabilizers}
\end{equation}
For eigenvalues \(\lambda_{ia}\), \(a=1,2,3\), of the model metric at the \(n\)
training points, and with
\(\bar\lambda_{H_2,i}=\frac13\operatorname{tr}g_{H_2}(x_i)\), these are
\begin{align}
 \mathcal L_{\rm pos}
 &=\frac{1}{3n}\sum_{i,a}\frac1{40}
 \log\!\left[1+
 \exp\!\left(40\left(0.01-\frac{\lambda_{ia}}
 {\bar\lambda_{H_2,i}}\right)\right)\right],\\
 \mathcal L_{\rm corr}
 &=\frac{\sum_i\|\partial\bar\partial\phi(x_i)\|_F^2}
 {\sum_i\|g_{H_2}(x_i)\|_F^2}.
 \label{eq:phi-stabilizer-components}
\end{align}
Here \(\|\cdot\|_F\) denotes the Frobenius norm,
\(\mathcal L_{\rm pos}\) is a positivity barrier at one percent of
the local \(H_2\) scale and \(\mathcal L_{\rm corr}\) weakly regulates
the correction size; the TN loss needs neither
(Proposition~\ref{prop:positive}).  Both affect the \(\phi\)-model's
optimization but are excluded from the common geometric validation quantity
\begin{equation}
 S_{\rm val}=
 \mathcal L_{\rm native}\big|_{\text{validation sample}},
 \label{eq:x11-validation-score}
\end{equation}
which is Eq.~\eqref{eq:native-loss} evaluated with the validation points and
weights.

\paragraph{Comparison design.}
\label{sec:metric-evaluation}

The source spaces and degrees are geometry-specific and were fixed in
Sec.~\ref{sec:geometry}; what is shared across geometries is how models
are initialized, compared and recorded.  Degree continuations that start from the
bond-dimension-one lift of a base metric \(H_b\) use the exact function-preserving
identity~\eqref{eq:exact-degree-continuation}; this covers the \(X_{11}\)
warm start below.  Continuations that start from an already trained TN
(the \(X_{21}\) degree ladder and the \(X_{22}\) degree step) instead
re-embed the fitted coefficients and preserve the represented function
only up to the reference term and the bond-rank constraints quantified in
the corresponding subsections below; they are initialization schemes, not
exact identities.
The reference-term scale \(\varepsilon_{\rm ref}\) is a numerical
setting rather than a universal constant, and rescaling it is absorbed by
the overall scale of \(B_\theta\).  When a bond is enlarged, the new block
on one side is set to zero and the block on the other side is nonzero; this
preserves the represented function while leaving new trainable parameters.
Both constructions are specified in
Appendix~\ref{app:eps-and-activation}, and results obtained at different
\(\varepsilon_{\rm ref}\) are never combined into a causal comparison.

The same bookkeeping applies to every comparison reported below.  The
data used to optimize the parameters, select the final metric and
evaluate the reported errors are kept separate, and whenever two metrics
are compared directly they are evaluated on exactly the same points and
with the same integration weights.  Results from exploratory
calculations are identified as such and are not used as independent
evidence for model ordering.

\subsection{\texorpdfstring{\(X_{11}\)}{X11}: metric comparison from a common initial metric}

On \(X_{11}\), we compare three metric constructions initialized from the
same degree-two metric: a positive degree-six TN, a parameter-matched neural
K\"ahler-potential correction, and an unrestricted \(H_6\) model.  Each model
is optimized three times independently, and the resulting metrics are
evaluated on the same independent sample.  We first compute a
positive Hermitian form \(H_2\) by the degree-two \(T\)-iteration with
\(\dd\mu=\dd\mu_\Omega\).  If
\(s_2\) denotes the complete 45-dimensional basis of
\(H^0(X_{11},L^2)\), the common initial metric is defined by
\begin{equation}
 F_2=s_2^\dagger H_2s_2,
 \qquad K_{H_2}=\frac12\log F_2.
 \label{eq:x11-common-h2}
\end{equation}
The first refinement is a three-site TN built from three copies of the
degree-two section vector.  It has algebraic degree \(k=6\), bond dimension
\(D=5\), the complete local operator basis \(q=45^2=2025\), and
\(141{,}750\) real trainable parameters.  By comparison,
\(h^0(X_{11},L^6)=871\), so an unrestricted \(H_6\) metric has \(871^2=758{,}641\)
real matrix entries; the TN uses \(18.7\%\) of that raw coefficient count
(Table~\ref{tab:parameter-comparison}).
Its potential is
\begin{equation}
 K_{\rm TN}=\frac16\log\!\left[
 \left\|B_\theta s_2^{\otimes3}\right\|^2
 +\varepsilon_{\rm ref}F_2^3\right].
 \label{eq:x11-tn-potential}
\end{equation}
The initial TN represents the exact power \(F_2^3\), and therefore gives
the same metric as Eq.~\eqref{eq:x11-common-h2}.  It is then enlarged
from \(D=1\) to \(D=5\) without changing the represented function: in
Eq.~\eqref{eq:function-preserving-bond-expansion}, the newly added block on one
side is set to zero while the block on the other side is nonzero.  All three
coefficient tensors are varied jointly.  Positivity follows from the
factorized square and the fixed positive reference term.

The second refinement follows the scalar-potential strategy used in neural
Calabi--Yau metric calculations~\cite{larfors2021size}:
\begin{equation}
 K_\phi=K_{H_2}+\phi_\vartheta.
 \label{eq:cymetric-residual}
\end{equation}
To give this model the same degree-two section information used by the TN,
we form the projectively invariant normalized section outer product
\begin{equation}
 \rho^{(s_2)}_{ab}
 =\frac{(s_2)_a\overline{(s_2)_b}}{s_2^\dagger s_2},
 \qquad a,b=1,\ldots,45.
 \label{eq:x11-residual-features}
\end{equation}
The diagonal entries and the real and imaginary parts above the diagonal
provide \(45^2=2025\) real input features.  A fully connected network with
trainable parameters \(\vartheta\) and three width-66 GELU hidden layers
produces the real scalar
\(\phi_\vartheta\); its raw output is multiplied by \(0.1\) before it is
added to the K\"ahler potential.  The output layer has no bias and is initialized to
zero, so this model also begins with \(K_\phi=K_{H_2}\).  It has
\begin{equation}
 P_\phi=(2025w+w)+2(w^2+w)+w=142{,}626,
 \qquad w=66,
 \label{eq:x11-residual-count}
\end{equation}
real trainable parameters.  The projectively invariant input
makes \(\phi_\vartheta\) a global scalar correction and therefore preserves
the K\"ahler class, while positivity of
\(g_{H_2}+\partial\bar\partial\phi_\vartheta\) is encouraged by an
eigenvalue penalty during training rather than guaranteed by the
parameterization.  This is our \(\phi\)-model adapted to the
gCICY geometry, not a direct execution of the public
\texttt{cymetric} package~\cite{larfors2022}.

Equality of the initial metrics and source data does not make the two
parameter-space initializations identical.  The TN begins from a
structured factorization that exactly represents the degree-two metric,
whereas the hidden weights of the \(\phi\)-model are random and
only its output layer is zero.  The ability to embed a lower-degree metric
exactly into a higher-degree factorization is part of the algebraic
parameterization being tested; the comparison does not claim to isolate it
from all other optimization effects.
Both models read the same complete degree-two section data: the
TN contracts three copies of the section vector through a positive
algebraic factorization, while the \(\phi\)-model flattens the normalized
outer product in Eq.~\eqref{eq:x11-residual-features} into a scalar
network.  The comparison therefore
evaluates the two complete metric constructions at a comparable parameter
budget.

To compare with the unrestricted algebraic family at the same degree,
we also trained three full-\(H_6\) models.  The numerically reconstructed
multiplication map from cubic products of degree-two sections has rank
\(871=h^0(X,L^6)\) at all three tested tolerances, so within numerical
precision these products span the complete degree-six section space.  Let
\(\mathcal M\) be the matrix expressing the ordered cubic products in the
chosen degree-six basis, so that
\(s_2^{\otimes3}=\mathcal M s_6\).  We initialize
\[
 H_6^{(0)}=c\,\mathcal M^\dagger H_2^{\otimes3}\mathcal M,
 \qquad c>0.
\]
This matrix is positive definite because \(H_2\) is positive definite and
\(\mathcal M\) has full column rank.  Moreover,
\[
 s_6^\dagger H_6^{(0)}s_6
 =c\bigl(s_2^\dagger H_2s_2\bigr)^3=cF_2^3.
\]
Consequently, \(H_6^{(0)}\) gives exactly the same K\"ahler metric as the
common \(H_2\) starting point; the constant \(c\) only fixes the matrix
normalization.
The product-section relation and the equality of the lifted metric with the
degree-two source were also verified numerically, as reported in
Appendix~\ref{app:x11-settings}.  Starting from this lift, without using a
fitted TN, we optimize \(H_6=S_6^\dagger S_6\) to validation convergence on
the same training and validation samples.  The resulting metrics characterize
the solutions reached from the direct \(H_2\) lift; they do not estimate the
global minimum of the unrestricted family.

Each architecture is optimized three times from independent random
initializations and selected with the common geometric validation quantity
in Eq.~\eqref{eq:x11-validation-score}.  The nine resulting metrics are then
evaluated on one independent \(200{,}000\)-point sample with
\(\ESS=101{,}536\).  The comparison concerns validation-converged solutions
rather than equal computational cost; sample sizes and optimization settings
are given in Appendix~\ref{app:x11-settings}.
Table~\ref{tab:x11-comparison} reports the common-sample results.
\begin{table}[!htbp]
\centering
% Generated from frozen pointwise X11 arrays by
% scripts/generate_gcicy_tn_final_tables.py.
\resizebox{\textwidth}{!}{%
\begin{tabular}{@{}lrrrr@{}}
\toprule
model & trainable real parameters & $\sigma$ & $\chi$ & $\lambda_{\min}(g)$ \\
\midrule
common $H_2$ start & 2,025 & $0.25164$ & $0.30889$ & $0.01817$ \\
positive $k=6,D=5$ TN & 141,750 & $0.01480\pm0.00138$ & $0.02260\pm0.00161$ & $0.02502\pm0.00018$ \\
neural $\phi$-model & 142,626 & $0.02237\pm0.00025$ & $0.03307\pm0.00035$ & $0.02606\pm0.00030$ \\
direct $H_6$, extended to plateau & 758,641 & $0.02912\pm0.00054$ & $0.05583\pm0.00124$ & $0.02520\pm0.00005$ \\
\bottomrule
\end{tabular}
}

\begin{tabular}{@{}lrrr@{}}
\toprule
model & $Q_{0.999}(|\log r|)$ & $\operatorname{CVaR}_{1\%}(|\log r|)$ & $\max |\log r|$ \\
\midrule
positive $k=6,D=5$ TN & $0.20437\pm0.01650$ & $0.12663\pm0.00553$ & $0.65712\pm0.02200$ \\
neural $\phi$-model & $0.25888\pm0.01130$ & $0.17803\pm0.00441$ & $0.57768\pm0.05859$ \\
direct $H_6$, extended to plateau & $0.54807\pm0.01094$ & $0.37578\pm0.01508$ & $0.99497\pm0.03681$ \\
\bottomrule
\end{tabular}

\caption{\(X_{11}\) errors on a common independent \(200{,}000\)-point
sample.  Entries are means and sample standard deviations over three
optimizations.  The TN and \(\phi\)-model start from the same \(H_2\) metric;
the unrestricted \(H_6\) model starts from its exact cubic lift.  Tail
quantities use \(|\log r|\), and all sampled metrics are positive.}
\label{tab:x11-comparison}
\end{table}
Both models improve the common starting metric by more than an order of
magnitude: \(\sigma\) falls from \(0.25164\) to \(0.01480\pm0.00138\) for
the TN and to \(0.02237\pm0.00025\) for the \(\phi\)-model.  The two
sets of runs do not overlap---the worst TN run, \(\sigma=0.01585\), is
still better than the best \(\phi\)-model run, \(0.02209\)---so the ordering
does not depend on which pair of runs is compared.  On the run means the
TN lowers \(\sigma\) by \(33.8\%\) and \(\chi\) by \(31.7\%\).

Resampling complete fibres within each run confirms these gains: the
paired intervals stay positive in all three runs for \(\sigma\), for
\(\chi\), and for the tail conditional mean \(\CVaR_{1\%}(|\log r|)\),
which the TN lowers by \(28.9\%\).  The extreme quantile
\(Q_{0.999}(|\log r|)\) is \(21.1\%\) lower on average and lower in each
of the three runs, but one paired interval reaches zero; we therefore
report it as an aggregate improvement rather than a run-by-run one.
Figure~\ref{fig:x11-tail-survival} shows the corresponding tail survival
functions: the TN curve lies below the \(\phi\)-model curve over the resolved
part of both tails, and the two converge only in the sparsely populated
extreme where the quantile estimate loses resolution.
\begin{figure}[!htbp]
\centering
\includegraphics[width=\textwidth]{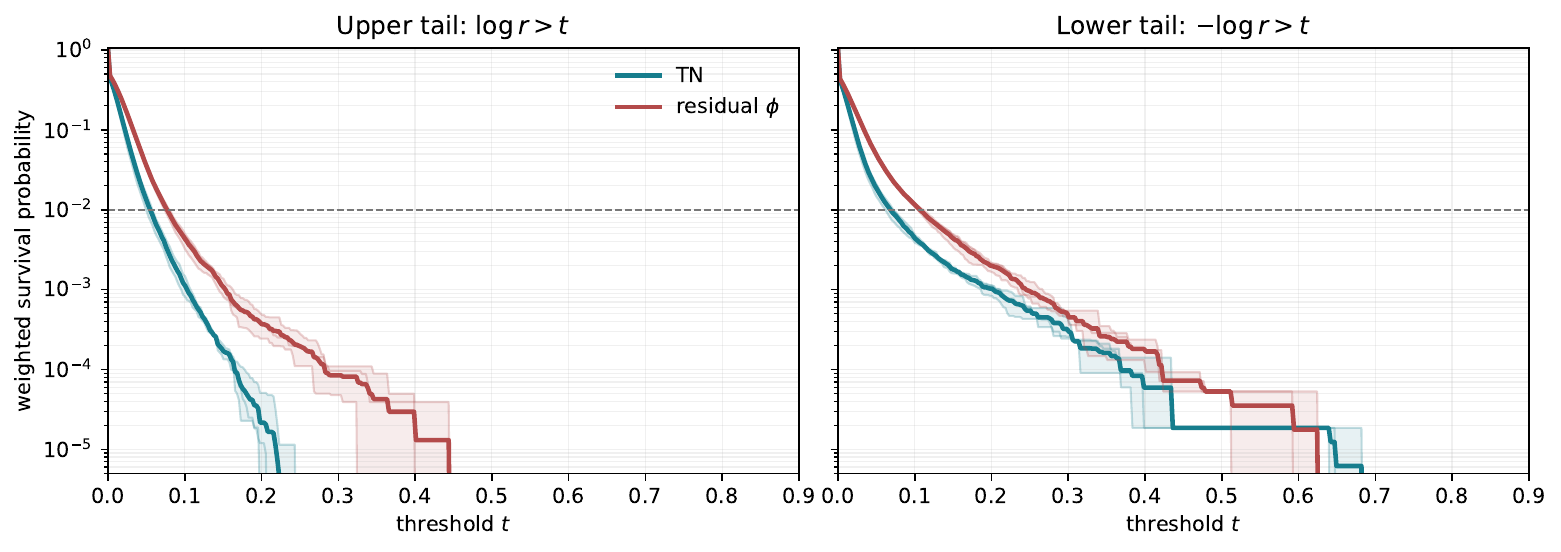}
\caption{Weighted upper- and lower-tail survival functions for the TN and
the \(\phi\)-model on the common \(X_{11}\) final sample.
Thin curves show the three independent runs; thick curves show their
pointwise mean, and shaded regions their range.  The dashed line marks
one-percent tail mass.  The TN has lower \(\CVaR_{1\%}(|\log r|)\) in all
three paired comparisons.  The mean \(Q_{0.999}(|\log r|)\) is also lower,
although that quantile is unresolved in the third pair.  The plot is an
empirical statement on the sampled measure, not a pointwise dominance
claim or a uniform bound.}
\label{fig:x11-tail-survival}
\end{figure}

Because the initial metrics coincide exactly, these differences cannot be
attributed to a better starting geometry.  They show that, under the
common geometric validation criterion, the positive algebraic TN uses this
parameter budget more effectively than the tested
\(\phi\)-model.  This is a comparison of the computed metrics under the stated
optimization procedures, not a claim about the global minimum of either
model family.

Direct optimization of the unrestricted \(H_6\) family from the exact
\(H_2\) lift reaches
\((\sigma,\chi)=(0.02912\pm0.00054,\,0.05583\pm0.00124)\).  The TN means
are lower by \(49.2\%\) and \(59.5\%\), respectively, while its mean
\(Q_{0.999}(|\log r|)\) and \(\CVaR_{1\%}(|\log r|)\) are lower by
\(62.7\%\) and \(66.3\%\).  Using the Hermitian forms represented by the
optimized TNs as initial conditions for unrestricted \(H_6\) optimization
lowers the validation errors further to
\(\sigma=(0.00797,0.00779,0.00801)\).  The two representations agree at the
metric level to approximately \(10^{-9}\).  Since the unrestricted family
contains the TN family, this contrast demonstrates sensitivity to
initialization and conditioning rather than greater expressivity of the
factorized ansatz.

We also asked whether a TN with a raw parameter count close to that of
unrestricted \(H_6\) could close the remaining gap.  This was a development study,
performed after the principal \(X_{11}\) comparison had been inspected,
and it does not enter the final-sample ranking.  Five allocations with
\((k,D)=(6,12),(8,9),(14,6),(18,5),(26,4)\) contain between \(680{,}400\)
and \(777{,}600\) parameters, against \(758{,}641\) for unrestricted \(H_6\); their
common development-evaluation errors are
\(\sigma_{\rm dev}=0.014112\), \(0.010475\), \(0.010734\), \(0.010706\)
and \(0.010854\), respectively.  Continuing the fitted \(k=6,D=5\) model
to \(k=6,D=13\), with \(789{,}750\) parameters, reaches
\(\sigma_{\rm dev}=0.009236\) under a gentler learning-rate schedule,
whereas the same final architecture reaches \(0.013644\) under the
original schedule.  On the common validation sample the gentler TN path
has \((\sigma,\chi)=(0.008250,0.013240)\), while the corresponding
TN-initialized unrestricted \(H_6\) continuation has \((0.007970,0.013567)\).  The
near-budget TN therefore approaches the parent-family result without
dominating it: its \(\sigma\) is higher and its \(\chi\) slightly lower.
The schedule comparison also shows that the learning rate alone can move
\(\sigma_{\rm dev}\) by roughly a third at fixed architecture, which we
return to in Sec.~\ref{sec:conclusion}.  The full architecture and path
tables, together with the fitted checkpoints and pointwise evaluation arrays,
are included in the reproduction archive~\cite{reproarchive}.

\subsection{\texorpdfstring{\(X_{21}\)}{X21}: compression and the limits of bond growth}

For \(X_{21}\), the complete degree-one source has dimension \(d=11\),
and hence the local operator basis contains \(q=d^2=121\) elements.  We use this
geometry to answer two separate questions: can the factorized form reach a
more accurate metric than a full Hermitian form at a comparable parameter
budget in some situations, and does enlarging the bond dimension of an already trained
network actually help?

\paragraph{Structured compression at degree eight.}
We compare a degree-four metric with \(104{,}976\) real parameters and a
degree-eight, \(D=8\) tensor network with \(96{,}800\) parameters.  The TN
entry is the median-validation member of three optimizations.  Both metrics
are evaluated on the same independent \(196{,}608\)-point sample, for which
\(\ESS=137{,}142\) and the \(32{,}768\) complete fibres give
\(\ESS_{\rm fibre}=29{,}311\).

\begin{table}[t]
\centering
\small
% Generated from frozen pointwise X21 arrays by
% scripts/generate_gcicy_tn_final_tables.py.
\begin{tabular}{lrcccc}
\toprule
model & \(P\) & \(\sigma\) & \(\chi\) & \(Q^{\rm abs}_{0.999}\) & \(\CVaR^{\rm abs}_{1\%}\) \\
\midrule
full \(H_4\) & 104,976 & 0.016273 & 0.023968 & 0.206 & 0.127 \\
positive \(k=8,D=8\) TN & 96,800 & 0.010503 & 0.015236 & 0.129 & 0.076 \\
\bottomrule
\end{tabular}

 \caption{The unrestricted degree-four Hermitian metric and the positive
 degree-eight tensor network on a common independent \(X_{21}\) sample.  The
 column \(P\) gives the number of real trainable parameters.  Tail quantities
 use \(|\log r|\); one optimized metric per representation enters this
 comparison.}
\label{tab:x21-final-blind-compression}
\end{table}

The tensor network gives the more accurate metric across the board:
\(\sigma\) falls from \(0.016273\) to \(0.010503\) and \(\chi\) from
\(0.023968\) to \(0.015236\)---reductions of \(35.46\%\) and
\(36.43\%\) with \(7.79\%\) fewer parameters---and the two tail
measures drop by roughly \(37\%\) and \(40\%\)
(Table~\ref{tab:x21-final-blind-compression}).  Resampling complete fibres
keeps all four improvements strictly positive for this pair of metrics.
Because only one optimized metric per representation enters, the comparison
does not measure between-fit variation; the saving concerns parameter count,
not evaluation or training time.

This ordering may appear to conflict with the \(X_{11}\) result, where
the best observed metric was ultimately an unrestricted \(H_6\) form warm-started
from the TN.  The two results instead illustrate complementary forms of
practical access supplied by the factorization, rather than greater
same-degree expressiveness.  On \(X_{11}\), the unrestricted \(H_6\) family is
computationally affordable, but neither direct route reaches the best
observed endpoint: direct unrestricted \(H_6\) optimization stalls, the TN
identifies a useful basin, and continuation in the parent family lowers
the error further.  Among the optimization paths tested here, that
endpoint is reached only through the TN initialization.  On \(X_{21}\),
the same-degree parent unrestricted \(H_8\) family contains \(5{,}953{,}600\)
real parameters and lies far outside the matched budget.  The TN instead
provides a trainable structured subfamily at degree eight, and its
comparison with the affordable unrestricted \(H_4\) model shows that, under the
tested finite budget, allocating parameters to degree through
compression is more effective than using the same budget for unrestricted
degree-four coefficients.  At equal degree the full family can refine a TN start; at
equal budget compression buys degree.  In both cases, the factorization
makes selected high-degree algebraic metrics practically accessible.

\paragraph{Bond growth at equal updates.}
In the early \(X_{21}\) runs, enlarging a trained \(D=8\) network to
\(D=12\) and training it further lowered the error, which suggests that
the larger bond helps.  But that workflow changes two things at once:
the model gets bigger, and it also trains longer.  To find out which one
lowers the error, we took the three fitted \(D=8\) models and continued
each along two branches: one keeps \(D=8\); the other first expands to
\(D=12\) without changing the represented function and then trains on.
Both branches use the same data, minibatch order, optimizer settings and
update count (Appendices~\ref{app:eps-and-activation}
and~\ref{app:x21-x22-settings}), so within each pair the only difference
is whether the extra bond channels exist.

Three numbers tell the story.  The sources start at a mean
\(\sigma=0.010505\); training alone brings this to \(0.008394\);
training with the enlarged bond brings it to \(0.008228\).  Nearly all
of the improvement therefore comes from the additional training.  The
extra channels move \(\sigma\) by a further \(1.92\%\pm2.86\%\) and
\(\chi\) by \(1.10\%\pm2.66\%\) across the three pairs---within the
run-to-run scatter, with one pair going the other way and no improvement
in the tails.  At equal update counts this experiment therefore cannot
resolve any benefit from the larger bond, even though the \(D=12\) model
carries \(2.22\) times as many parameters and a costlier update.  The
capacity is real; the optimizer did not turn it into accuracy here, and
the early \(D=8\to12\) gain was mostly continuation.  (The starting
value is the mean of the three source models on a fresh sample, close to
but distinct from the single checkpoint of
Table~\ref{tab:x21-final-blind-compression}; the reproduction
archive~\cite{reproarchive} contains the run-by-run values and the
fibre-cluster bootstrap records.)

\paragraph{Fixed-bond degree refinement.}
The complementary question is whether raising the algebraic degree
keeps paying off.  It matters for the ansatz itself: at fixed \(q=121\)
and \(D=8\) the parameter count grows only linearly in \(k\), so if
accuracy kept improving with degree, one could simply keep raising it.
For each \(k=8,12,16,20\) we trained three models, all initialized by the
registered repeat re-embedding of the same fitted \((4,5)\) checkpoint,
followed by function-preserving bond enlargement.  As stated in
Sec.~\ref{sec:metric-evaluation}, this cross-degree re-embedding is an
initialization scheme rather than the exact identity
\eqref{eq:exact-degree-continuation}.  Each model was stopped at the registered
validation plateau, frozen, and then evaluated blind on one new common
\(196{,}608\)-point sample.  The trainable count rises linearly along
the ladder, from \(96{,}800\) at \(k=8\) to \(282{,}656\) at \(k=20\).
(The \(k=8\) entry is a fresh continuation under this protocol, not the
Table~\ref{tab:x21-final-blind-compression} checkpoint retrained.)
These models are defined on the product-generated spaces
\(W_k=\operatorname{im}\mu_m\).  Since surjectivity is not established here
at these higher degrees, we do not identify \(W_k\) with the complete
space \(H^0(X,L^k)\).
\begin{table}[t]
\centering
% Generated from 12 frozen pointwise X21 arrays by
% scripts/generate_gcicy_tn_final_tables.py.
\small
\begin{tabularx}{\textwidth}{@{}rrXXXX@{}}
\toprule
\(k\) & trainable \(P\) & \(\sigma\) & \(\chi\) & \(Q^{\rm abs}_{0.999}\) & \(\CVaR^{\rm abs}_{1\%}\) \\
\midrule
8 & 96,800 & \(0.008767\pm0.001181\) & \(0.012791\pm0.001516\) & \(0.09823\pm0.00976\) & \(0.06597\pm0.00686\) \\
12 & 158,752 & \(0.005696\pm0.000232\) & \(0.008715\pm0.000264\) & \(0.07022\pm0.00061\) & \(0.04715\pm0.00066\) \\
16 & 220,704 & \(0.006357\pm0.000746\) & \(0.009318\pm0.000997\) & \(0.07058\pm0.00478\) & \(0.04756\pm0.00405\) \\
20 & 282,656 & \(0.006073\pm0.000158\) & \(0.009268\pm0.000199\) & \(0.07273\pm0.00358\) & \(0.04978\pm0.00142\) \\
\bottomrule
\end{tabularx}

\caption{Fixed-\(D=8\) degree comparison on the common independent
\(X_{21}\) final sample.  Entries are the mean and sample standard
deviation over three independently optimized models.  All use the complete
\(q=121\) local operator dictionary, and all sampled metrics are positive.
The tail quantities use \(|\log r|\).}
\label{tab:x21-degree-ladder}
\end{table}
Table~\ref{tab:x21-degree-ladder} shows a large initial gain that then
stops.  Comparing the three-run means, the step from \(k=8\) to \(12\)
lowers \(\sigma\) by \(35.0\%\), \(\chi\) by \(31.9\%\), and both tail
measures by \(28.5\%\).  An unpaired run-level comparison supports this
initial step.  The models at different degrees were trained independently;
their common final sample pairs only the Monte Carlo evaluation, not the
optimization runs.  Later mean changes are smaller than the observed
between-run variation: \(k=16\) is \(11.6\%\) worse in \(\sigma\) than
\(k=12\), while the \(k=16\to20\) step lowers the mean \(\sigma\) by
\(4.5\%\) and \(\chi\) by \(0.5\%\) but worsens the two tail means by
\(3.0\%\) and \(4.7\%\).  The data therefore establish the initial
degree gain but no reproducible improvement beyond \(k=12\); the largest
model is not the best observed endpoint.

This is a finite-range comparison rather than an asymptotic degree law.
Within the tested range, increasing the degree is useful from \(k=8\) to
\(k=12\), but linear parameter growth at fixed \(D\) does not by itself
produce monotone improvement.

\paragraph{Path dependence in the degree--bond plane.}
The preceding comparisons vary degree and bond dimension separately.
When both are allowed to change, the optimized metric also depends on the
initialization path through the \((k,D)\) plane.
Figure~\ref{fig:x21-path-dependence} compares models descended from the
same fitted \((k,D)=(4,5)\) metric, either by direct cross-degree
re-embedding followed by bond enlargement or by enlarging an intermediate
optimized model.

\begin{figure}[t]
\centering
\includegraphics[width=0.92\textwidth]{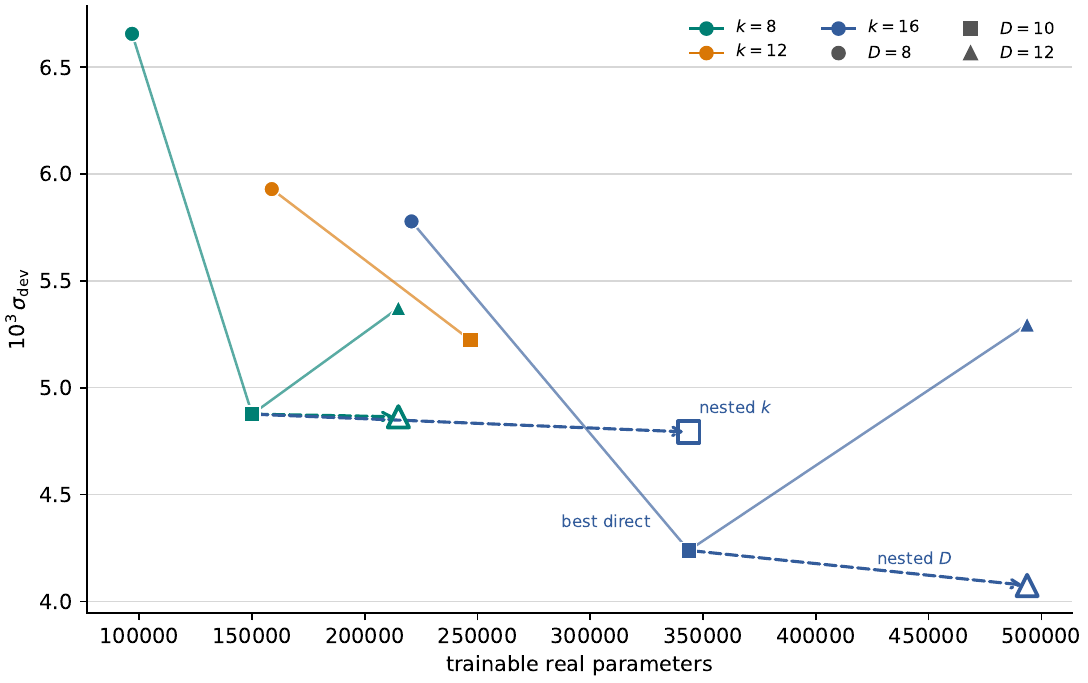}
\caption{Dependence of the \(X_{21}\) validation error on algebraic degree,
bond dimension and initialization path.  The vertical axis is
\(10^3\sigma\) on a common validation sample.  Filled points are optimized
directly from the \((4,5)\) metric; open points and dashed arrows follow
continuations from an already optimized model.  This exploratory comparison
maps the optimization landscape and is not a replicated architecture
comparison.}
\label{fig:x21-path-dependence}
\end{figure}

Among the direct optimizations, the lowest validation error occurs at
\((k,D)=(16,10)\), rather than for the largest model:
\begin{equation}
 (\sigma,\chi)=(0.004239,0.006692).
 \label{eq:x21-best-direct}
\end{equation}
Increasing its bond dimension to \(D=12\) and continuing the optimization
gives the lowest value observed in this study,
\begin{equation}
 (\sigma,\chi)=(0.004075,0.006446),
 \label{eq:x21-best-nested}
\end{equation}
whereas direct optimization of the same \((16,12)\) architecture from the
common origin performs substantially worse.  Continuation is not uniformly
advantageous: raising the degree of the fitted \((8,10)\) model improves it
only slightly and does not reach the directly optimized \((16,10)\) value.
At fixed degree the trainable families are exactly nested in the bond
dimension, so their exact optimal errors cannot increase with \(D\); the
different trained endpoints at the same \(k\) therefore expose path and
optimization dependence.  Across degrees, however, the re-embeddings of
Sec.~\ref{sec:metric-evaluation} preserve the function only approximately
and change the representation and optimization path together.  The
cross-degree nonmonotonicity does not by itself separate those effects, and
we make no capacity claim from it.
The near-budget \(X_{11}\) study gives the complementary outcome: there a
direct \(D=13\) continuation outperforms the tested nested schedules,
whereas here the \(D=10\to12\) continuation improves on the best direct
optimization.  Exact nesting at fixed degree guarantees inclusion of the
smaller variational family, but not that a particular optimization path
will find its best metric.

The degree-four baseline uses the complete section space: the multiplication
map has numerical rank \(324=h^0(X,L^4)\).  Its section basis therefore spans
all of \(H^0(X,L^4)\), rather than a proper product-generated subspace.
Surjectivity is not assumed for the higher-degree TN models.

\subsection{\texorpdfstring{\(X_{22}\)}{X22}: transfer to a two-stage geometry}

The final example tests the same positive tensor-network construction on the
type-\((2,2)\) geometry, whose second stage carries two generalized
defining sections.  The source space has dimension \(d=17\), the
complete local operator basis contains \(d^2=289\) elements, and all
models use bond dimension \(D=5\), with numerical settings in
Appendix~\ref{app:x21-x22-settings}.  We do two things here: compare
truncated learned dictionaries with the complete fixed one at degree
four, and then continue the fitted model to degree six.

\paragraph{Local-dictionary capacity at degree four.}

At degree four, we compare the fitted \(q=22\) dictionary, its
function-preserving enlargement to \(q=60\), and the complete fixed
dictionary under the matched optimization conditions of
Appendix~\ref{app:x21-x22-settings}.  For the complete model we use
all \(d^2=17^2=289\) matrix units as a fixed orthonormal operator basis
and train only their coefficient tensors.  The fitted \(q=22\) model
embeds into this basis with maximum potential and metric-entry
discrepancies \(1.6\times10^{-15}\) and \(1.1\times10^{-13}\),
respectively, before training.
Table~\ref{tab:x22-dictionary} reports the three-way comparison.
\begin{table}[t]
\centering
% Generated from frozen pointwise X22 arrays by
% scripts/generate_gcicy_tn_final_tables.py.
\begin{tabular}{lrcccc}
\toprule
local dictionary & \(P\) & \(\sigma\) & \(\chi\) & \(Q_{0.999}(|\log r|)\) & \(\CVaR_{1\%}(|\log r|)\) \\
\midrule
trainable \(q=22\) & 15,356 & 0.10040 & 0.13002 & 0.63609 & 0.50334 \\
trainable \(q=60\) & 41,880 & 0.07392 & 0.09740 & 0.54185 & 0.40997 \\
fixed complete \(q=289\) & 34,680 & 0.03853 & 0.05399 & 0.42533 & 0.26654 \\
\bottomrule
\end{tabular}

 \caption{Local-dictionary comparison for \(X_{22}\), using one optimized
 metric per dictionary on the same \(65{,}532\) points.  The \(q=22\) and \(q=60\) dictionaries
 are trainable, whereas the complete \(q=289\) matrix-unit dictionary is
 fixed; its parameter count therefore contains only coefficient tensors.
 The column \(P\) gives the number of real trainable parameters.  Tail
 statistics use \(|\log r|\).}
\label{tab:x22-dictionary}
\end{table}
Enlarging the learned dictionary from \(q=22\) to \(q=60\) already lowers
\(\sigma\) by \(26.4\%\).  Relative to that \(q=60\) dictionary, the fixed
complete basis lowers \(\sigma\), \(\chi\), \(Q^{\rm abs}_{0.999}\) and
\(\CVaR^{\rm abs}_{1\%}\) by a further \(47.9\%\), \(44.6\%\), \(21.5\%\)
and \(35.0\%\), respectively.  Using 2,000 paired resamples of the
21,844 fibre clusters, the central 95\% interval for each of the four
improvements lies strictly above zero.  The complete model also uses fewer
trainable parameters, \(34{,}680\) rather than \(41{,}880\): the learned
model spends \(82.8\%\) of its budget on the dictionary itself, and
since the complete basis spans \(\operatorname{End}(\CC^{17})\), its
family contains the learned one.  At this local dimension, truncating and
learning the operator dictionary therefore costs more parameters and gives
a less accurate metric than using the complete fixed basis.  This comparison
concerns local operator capacity, not bond dimension.

The learned-dictionary gain in Table~\ref{tab:x22-dictionary} relies on
the function-preserving expansion of
Appendix~\ref{app:eps-and-activation} applied to the dictionary: new
operators enter as a nonzero orthonormal complement with zero
coefficient slices, so the metric is unchanged while the new directions
remain reachable.  Useful capacity thus depends not only on the coefficient
count but on whether the added directions can be used by the optimization.
Each entry is one optimization evaluated on a common \(65{,}532\)-point
sample, with \(\ESS=23{,}399\) and fibre-cluster
\(\ESS_{\rm fibre}=9{,}547\).

\paragraph{Degree continuation.}

The same fitted \(q=60,k=4\) metric and its \(k=6\) continuation are
evaluated together on a second, independently generated
\(65{,}532\)-point sample with importance \(\ESS=22{,}579\) and
fibre-cluster effective sample size \(9{,}254\); the degree-four entries
here therefore need not coincide with those in
Table~\ref{tab:x22-dictionary}.  Continuing
the metric from degree four to degree six increases the trainable
count only from \(41{,}880\) to \(47{,}880\), or by \(14.3\%\), and gives
\begin{align}
 \sigma&:0.07447\to0.04738,&
 \chi&:0.10029\to0.06653,\nonumber\\
 Q^{\rm abs}_{0.999}&:0.91085\to0.60636,&
 \CVaR^{\rm abs}_{1\%}&:0.53377\to0.38537,
 \label{eq:x22-k}
\end{align}
with both entries evaluated on exactly the same points and with the same
importance weights.  The continuation reduces \(\sigma\) by
\(36.4\%\) and \(\chi\) by \(33.7\%\).  The upper-log quantile and
conditional tail mean improve from \(0.312\) and \(0.253\) to
\(0.213\) and \(0.172\), and their lower-log counterparts from
\(0.911\) and \(0.532\) to \(0.606\) and \(0.380\).  The observed
ratio range contracts from \([0.181,1.507]\) to \([0.237,1.412]\), and all
sampled metrics remain positive.  A paired bootstrap over complete fibre
clusters gives strictly positive improvement intervals for \(\sigma\),
\(\chi\), the absolute-log conditional tail mean and both one-sided
conditional tail means.  The interval for the \(0.999\) absolute-log
quantile crosses zero, so improvement of that particular quantile is not
treated as statistically resolved despite its lower point estimate.

\paragraph{Tail-statistic resolution.}

The two samples also quantify the resolution of the reported tail
statistics.  For the same fitted \(q=60,k=4\) metric, the degree-four and
degree-six samples give, respectively,
\[
 \begin{array}{c|cccc}
  &\sigma&\chi&Q^{\rm abs}_{0.999}&\CVaR^{\rm abs}_{1\%}\\
 \hline
 \text{degree-four sample}&0.07392&0.09740&0.54185&0.40997\\
 \text{degree-six sample}&0.07447&0.10029&0.91085&0.53377
 \end{array}
\]
Thus the bulk statistics drift by only \(0.7\%\) and \(3.0\%\),
whereas the two extreme-tail summaries drift by about \(68\%\) and
\(30\%\).  In particular, comparing the \(k=6\) value \(0.60636\)
with the degree-four-sample baseline \(0.54185\) would suggest that the
absolute-log quantile becomes worse, while the common-sample comparison
with \(0.91085\) suggests improvement.  We therefore use only the
same-sample paired comparison for the continuation and do not regard the
change in this quantile as resolved.

The matched continuation at fixed \(q=22\) shows that the improvements in
Table~\ref{tab:x22-dictionary} are not explained by additional optimization
alone.  By contrast, the degree-six calculation adds two sites and continues
the optimization, so it measures their combined effect rather than an
isolated degree contribution.
The training loss used a one-sided upper-tail term, whereas the
reported diagnostics are two-sided; this is why the lower-log tail is
displayed explicitly.

\section{Conclusion and outlook}
\label{sec:conclusion}

We have developed a positive tensor-network parameterization of algebraic
K\"ahler metrics and implemented it on three gCICY threefolds.  The
construction combines the tensor-network factorization with the geometric
data required on these spaces, including generalized sections,
Poincar\'e residues, intrinsic coordinate charts and fibrewise sampling.
For the source spaces used here, the associated projective maps are closed
immersions, so the resulting metrics are globally positive and remain in a
fixed K\"ahler class for every value of the trainable parameters.  At fixed
local and bond dimensions, both the parameter count and the measured
pointwise evaluation cost grow linearly with the number of multiplication
sites.

The \(X_{11}\) comparison shows that the factorization changes the
optimization landscape as well as the size of the representation.  Starting
from a common degree-two metric and using comparable parameter counts, the
TN consistently reaches lower bulk and tail errors than the neural
K\"ahler-potential correction.  Direct optimization of the unrestricted
degree-six Hermitian form from the same algebraic lift reaches less accurate
solutions, whereas unrestricted refinement initialized by a trained TN
produces the best metrics in this comparison.  Because the unrestricted
degree-six family contains the TN family, this ordering demonstrates
optimization-path dependence rather than an expressiveness advantage of the
factorized family at equal degree.

On \(X_{21}\), a degree-eight TN attains lower same-sample bulk errors than
the unrestricted degree-four metric while using slightly fewer trainable
parameters.  This result shows that, at a fixed parameter scale, compression
can make a higher algebraic degree more useful than an unrestricted
lower-degree form.  Increasing the bond dimension alone does not yield a
reproducible improvement, and the degree and bond studies show that the
observed gains eventually plateau and depend on initialization and
continuation.  The \(X_{22}\) calculation extends the same construction to a
two-stage gCICY and confirms that the required generalized-section and
residue data can be incorporated without changing the positive
tensor-network ansatz.

Taken together, these results identify the role of the TN as a structured
high-degree algebraic metric family rather than a universal replacement for
unrestricted Hermitian optimization.  The unrestricted family remains
powerful whenever it is affordable and can further refine a TN solution.
The benefit of factorization appears at degrees where the unrestricted
matrix becomes prohibitively large: for \(X_{21}\) at degree sixteen, the
unrestricted form has \(369{,}254{,}656\) real parameters, compared with
roughly \(4\times10^5\) in the networks studied here.  The calculations also
show that greater formal capacity is not sufficient by itself; the added
variational directions must also be explored effectively by the optimization.

Future work should test these conclusions on a broader collection of CICY
and gCICY geometries and develop continuation and adaptive-rank procedures
for which increasing the degree or bond dimension gives a reproducible gain
in accuracy.  A moduli-dependent extension, trained over complex-structure
and K\"ahler moduli rather than at a single fixed geometry, would turn the
present construction into a reusable family of metrics across moduli space.
The resulting metrics can then be assessed through physical observables,
including scalar and bundle Laplacian spectra, harmonic representatives,
matter-field normalizations and Yukawa couplings.

\section*{Acknowledgments and data availability}

The author used OpenAI Codex powered by ChatGPT models for software development
and assistance with model-training calculations, and Anthropic Claude and
DeepSeek models for manuscript editing and code review.  The author reviewed
all generated material and takes responsibility for the final scientific
content.  The source code is available on
\href{https://github.com/jtwangbimsa-dragon/gcicy-positive-tensor-networks}
{GitHub}.  Trained model parameters, evaluation samples, pointwise data and
scripts reproducing the tables and figures are archived on
Zenodo~\cite{reproarchive}.

\appendix

\section{Explicit generalized-section representatives}
\label{app:members}

\subsection{Type-\texorpdfstring{\((1,1)\)}{(1,1)} representatives}
\label{app:h11}

Let $m_{\rm H}$ denote the Hirzebruch-family index.  Then
\begin{equation}
 A=\PP^4_x\times\PP^1_y,\qquad
 p=\sum_{a=0}^{m_{\rm H}}g_a(x)y_0^{m_{\rm H}-a}y_1^a,\qquad
 \mathcal L=\cO_A(4,2-m_{\rm H}).
\end{equation}
The intermediate hypersurface is $M=\{p=0\}$.  Tensoring its restriction
sequence by $\mathcal L$ gives
\begin{equation}
 0\longrightarrow\cO_A(3,2-2m_{\rm H})
 \xrightarrow{\,\cdot p\,}\cO_A(4,2-m_{\rm H})
 \longrightarrow\cO_M(4,2-m_{\rm H})\longrightarrow0.
 \label{eq:hirzebruch-restriction-sequence}
\end{equation}
Since $H^0(A,\mathcal L)=0$, a generalized section is specified by a kernel class
\begin{equation}
 [c]\in\ker\!\left[
 H^1(A,\cO_A(3,2-2m_{\rm H}))\xrightarrow{\,\cdot p\,}H^1(A,\mathcal L)
 \right].
 \label{eq:hirzebruch-kernel-class}
\end{equation}
Cover $\PP^1_y$ by $U_0=\{y_0\neq0\}$ and $U_1=\{y_1\neq0\}$, and put
$t=y_1/y_0$ on the overlap.  There choose the C\v{e}ch
representative
\begin{equation}
 c=\sum_{j=1}^{2m_{\rm H}-3}C_j(x)t^{-j},
 \qquad \deg_x C_j=3.
 \label{eq:hirzebruch-cech-cochain}
\end{equation}
The class $[pc]$ vanishes precisely when $pc$ is a C\v{e}ch coboundary.
Splitting its Laurent series in $t$ gives the two chart representatives
explicitly: on $U_0$ the generalized equation is displayed by
$q_0=-(pc)_{\ge0}$, the part regular at $t=0$, and on $U_1$ by
$q_1=(pc)_{\le 2-m_{\rm H}}$, the part regular at $t=\infty$, each
converted to the corresponding homogeneous frame.  They satisfy
\begin{equation}
 q_1-q_0=pc\quad\text{on }U_0\cap U_1,
 \qquad q_0|_M=q_1|_M,
 \label{eq:hirzebruch-local-pieces}
\end{equation}
so the two displays agree on $M$ and differ off $M$ by a multiple of the
earlier equation $p$.

For the member used in this paper, $m_{\rm H}=3$, the target base degree
is $-1$ and $H^1(\PP^1,\cO(-1))=0$, so every class in
\eqref{eq:hirzebruch-kernel-class} maps to zero and the construction is
unobstructed.

\subsection{Type-\texorpdfstring{\((2,1)\)}{(2,1)} representatives}
\label{app:h21}

Write the two positive equations on $\PP^5_x\times\PP^1_y$ as
\begin{equation}
 p_1=y_0A_0(x)+y_1A_1(x),\qquad
 p_2=y_0^2B_0(x)+y_0y_1B_1(x)+y_1^2B_2(x),
\end{equation}
with $\deg A_i=1$ and $\deg B_i=2$.  Let $R_a$ be quadratic and $R_0,R_1$
linear in $x$.  Set $t=y_1/y_0$ on $U_0=\{y_0\ne0\}$ and
$u=y_0/y_1$ on $U_1=\{y_1\ne0\}$.  The regular frame representatives of
the $\cO(3,-1)$ section are
\begin{align}
 q_0&=A_1R_a+B_1R_1+B_2(R_0+tR_1)&&\text{on }U_0,\label{eq:x21-q0}\\
 q_1&=-A_0R_a-B_0(uR_0+R_1)-B_1R_0&&\text{on }U_1.\label{eq:x21-q1}
\end{align}
On the overlap, substituting the two affine ratios and clearing denominators
gives the exact polynomial identity
\begin{equation}
 y_0y_1(y_1q_0-y_0q_1)
 =y_0y_1p_1R_a+y_0p_2R_0+y_1p_2R_1.
 \label{eq:x21-overlap-identity}
\end{equation}
Hence $q_0=(y_0/y_1)q_1$ on $p_1=p_2=0$, which is exactly the
$\cO(3,-1)$ frame transition.  Differentiating
\eqref{eq:x21-overlap-identity} on $X$ gives the block-triangular equation
change required in \ref{lem:residue-covariance}.

\subsection{Type-\texorpdfstring{\((2,2)\)}{(2,2)} representatives}
\label{app:h22}

For $A=\PP^1_x\times\PP^1_y\times\PP^5_z$, write the positive equation
$p_2$ in both ways
\begin{equation}
 p_2=x_0d_0(y,z)+x_1d_1(y,z)
    =y_0c_0(x,z)+y_1c_1(x,z).
\end{equation}
The two generalized sections have one representative per chart, the
superscript naming the chart on which the display is regular:
\begin{equation}
 q_1^{(x_0)}=\frac{d_1}{x_0},\qquad
 q_1^{(x_1)}=-\frac{d_0}{x_1},
 \label{eq:x22-q1}
\end{equation}
and
\begin{equation}
 q_2^{(y_0)}=\frac{c_1}{y_0},\qquad
 q_2^{(y_1)}=-\frac{c_0}{y_1}.
 \label{eq:x22-q2}
\end{equation}
The identities $x_0d_0+x_1d_1=p_2$ and
$y_0c_0+y_1c_1=p_2$ prove equality on the intermediate locus and give the
frame transitions for $\cO(-1,1,1)$ and $\cO(1,-1,1)$.  They also show
directly that changing either rational display adds a multiple of the earlier
equation, so \ref{lem:residue-covariance} applies to both generalized
sections in the second block.

\section{Covariance of the Poincar\'e residue and cohomological data}
\label{app:globaldata}

\subsection{Gluing and covariance of the Poincar\'e residue}
\label{app:glue}

With the stage-ordered conventions of the main text, two chart regular
sequences $F_U,F_V$ for the same embedding are related by the block
lower triangular basis change of Eq.~\eqref{eq:conormal-change};
equivalently, $F_V-A_{VU}F_U$ lies in $I_X^2$.  Differentiating and
restricting to $X$ gives
\begin{equation}
 \left.\dd F_V\right|_X=\left.A_{VU}\,\dd F_U\right|_X
 \qquad\text{in }\left.\Omega_A^1\right|_X\otimes\mathbb C^c;
\label{eq:appendix-differential-transition}
\end{equation}
the terms containing $F_U\dd A_{VU}$ and the differential of the $I_X^2$
remainder vanish on $X$.  The explicit lower-triangular basis changes are
checked symbolically for the local representatives above and numerically in
every available projective chart.

\begin{lemma}[Gluing and covariance of the Poincar\'e residue]
\label{lem:residue-covariance}
Suppose $X$ is a regular embedding and
Eq.~\eqref{eq:conormal-change} holds with $A_{VU}$ invertible on $X$.
Let $z_U,z_V$ be ambient affine coordinates and write
$J_{VU}=\det(\partial z_V/\partial z_U)$.  Then the local Poincar\'e residues
obey
\begin{equation}
 \Omega_V=J_{VU}(\det A_{VU})^{-1}\Omega_U,
 \label{eq:appendix-residue-transition}
\end{equation}
including the displayed line-bundle frame factors.  In intrinsic coordinates
$u$ and $u'=u'(u)$ this becomes
\begin{equation}
 h_{u'}=h_u\det\!\left(\frac{\partial u}{\partial u'}\right),
 \qquad
 \rho_{\Omega,u'}=\rho_{\Omega,u}
 \left|\det\!\left(\frac{\partial u}{\partial u'}\right)\right|^2.
 \label{eq:appendix-rho-transition}
\end{equation}
\end{lemma}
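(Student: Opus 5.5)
We would prove the lemma from the defining property of the Poincar\'e residue rather than from the coordinate formula. On a chart \(U\), the local residue \(\Omega_U\) is the unique holomorphic three-form on \(X\cap U\) satisfying
\[
 \Omega_U\wedge \dd f_{U,1}\wedge\cdots\wedge \dd f_{U,c}
 =\dd z_U^1\wedge\cdots\wedge \dd z_U^n
 \qquad\text{along }X,
\]
where the wedge is taken in \(\Lambda^n T^*A|_X\). Uniqueness follows because the \(\dd f_{U,a}|_X\) are independent. The coordinate formula \eqref{eq:computational-residue}, \(h_u=\pm1/\det J_{\mathsf D}\), is recovered by inserting \(\dd z=\dd u\wedge\dd v\), with \(\dd f\equiv J_{\mathsf D}\,\dd v\) modulo \(\dd u\). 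The transition law is then obtained by comparing the defining identities on two charts.

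\emph{First step.} By Eq.~\eqref{eq:appendix-differential-transition}, \(\dd F_V|_X=A_{VU}\,\dd F_U|_X\). The remainder terms \(F_U\,\dd A_{VU}\) and \(\dd(I_X^2)\) vanish on \(X\), as already noted in the text. Multilinearity of the wedge product therefore gives
\[
 \dd f_{V,1}\wedge\cdots\wedge\dd f_{V,c}=\det A_{VU}\;\dd f_{U,1}\wedge\cdots\wedge\dd f_{U,c}
\]
on \(X\). On the volume side, \(\dd z_V^1\wedge\cdots\wedge\dd z_V^n=J_{VU}\,\dd z_U^1\wedge\cdots\wedge\dd z_U^n\).

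\emph{Second step.} Substituting both relations into the defining identity for \(\Omega_V\) gives
\[
 \Omega_V\wedge\det A_{VU}\,\dd F_U=J_{VU}\,\Omega_U\wedge\dd F_U .
\]
Wedging with the nonvanishing form \(\dd F_U\) is injective on the relevant quotient, so uniqueness yields \(\Omega_V=J_{VU}(\det A_{VU})^{-1}\Omega_U\), which is \eqref{eq:appendix-residue-transition}.

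The displayed line-bundle frame factors need one remark. Dehomogenizing the equations already multiplies each \(f\) by a frame factor, and these factors appear as the diagonal blocks of \(A_{VU}\). The block-triangular structure guarantees that \(\det A_{VU}\) is the product of the diagonal frame changes. The off-diagonal entries, which come from adding earlier equations such as the \(b\,p_U\) term, do not contribute.

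\emph{Intrinsic statement.} For \eqref{eq:appendix-rho-transition}, note that once \(\Omega\) is a fixed three-form on \(X\), changing intrinsic coordinates is ordinary pullback. From \(h_u\,\dd u^1\wedge\dd u^2\wedge\dd u^3=h_{u'}\,\dd u'^1\wedge\dd u'^2\wedge\dd u'^3\) and \(\dd u=\det(\partial u/\partial u')\,\dd u'\) we get \(h_{u'}=h_u\det(\partial u/\partial u')\), and taking the squared modulus gives the law for \(\rho\). When the two intrinsic charts sit in different ambient charts, one first applies \eqref{eq:appendix-residue-transition}. The product \(J_{VU}(\det A_{VU})^{-1}\) is the transition function of the bundle in Eq.~\eqref{eq:sequential-adjunction}, which the row-sum condition trivializes. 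Hence the two local residues represent the same global form, up to a constant normalization that we fix once.

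\emph{Main obstacle.} The delicate point is the claim that \(\det A_{VU}\) together with \(J_{VU}\) exactly reproduces the frame transition of \(K_X\), so that no stray factor survives. For generalized (negative-degree) sections the representatives are regular only chartwise, and \(A_{VU}\) must be read off from identities such as \eqref{eq:x21-overlap-identity}. We would handle this by computing \(A_{VU}\) explicitly, checking that its diagonal entries are the frame ratios (for example \(y_0/y_1\) for \(\cO(3,-1)\)), and then matching their product against the adjunction transition.
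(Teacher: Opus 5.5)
Your proposal is correct and follows the same route as the paper's proof. The differential transition $\dd F_V|_X=A_{VU}\,\dd F_U|_X$ supplies $\det A_{VU}$, the ambient top form supplies $J_{VU}$, taking residues gives the transition law, adjunction with the row-sum condition identifies the combined factor as the trivialized transition of $K_X$, and the intrinsic law is ordinary pullback; your characterization $\Omega_U\wedge\dd f_{U,1}\wedge\cdots\wedge\dd f_{U,c}=\dd z_U^1\wedge\cdots\wedge\dd z_U^n$ with its uniqueness argument, and your remark that block-triangularity makes $\det A_{VU}$ the product of the diagonal frame changes, make explicit what the paper's terse proof leaves implicit.
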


\begin{proof}
Write $F_V=A_{VU}F_U+R$ with $R\in I_X^2$.  Both
$\dd R|_X=0$ and $F_U\dd A_{VU}|_X=0$, so
\eqref{eq:appendix-differential-transition} gives
$\dd f_{V,1}\wedge\cdots\wedge\dd f_{V,c}
=(\det A_{VU})\dd f_{U,1}\wedge\cdots\wedge\dd f_{U,c}$.  The ambient top
form contributes $J_{VU}$.  Taking the Poincar\'e residue of the complete
intersection gives
\eqref{eq:appendix-residue-transition}.  The Calabi--Yau adjunction relation
makes the combined ambient and equation-frame transition the transition of
$K_X\simeq\cO_X$.  Expressing the resulting global form as
$\Omega=h_u\dd u^1\wedge\dd u^2\wedge\dd u^3$ proves
\eqref{eq:appendix-rho-transition}.  The local formula
$h_u=\pm\det(\partial F/\partial v)^{-1}$ is the same statement after choosing
dependent coordinates $v$.
\end{proof}

Because $g_{a\bar b}$ transforms covariantly, $\det g$ has the same squared
Jacobian factor as $\rho_\Omega$.  Thus $\eta=\det g/\rho_\Omega$ and the
importance weights are invariant under projective recharting and every
admissible change of implicit coordinates with a nonsingular local Jacobian.
In the numerical coordinate selection, candidates with
$\sigma_{\min}(J_{\mathsf D})\leq10^{-12}$ are excluded and the remaining
block with the smallest $\kappa_2(J_{\mathsf D})$ is used.

The complete pointwise calculation was also repeated numerically in all
available projective and implicit-coordinate charts.  The largest relative
discrepancy among the recomputed invariant quantities is reported below.
These finite-precision checks are not used to prove covariance; they test
that the numerical realization satisfies the transition laws above.
\begin{table}[ht]
\centering
\begin{tabular}{lrrr}
\toprule
geometry & projective-chart checks & implicit-chart checks
         & maximum discrepancy\\
\midrule
$X_{11}$ & 10 & 10 & $6.68\times10^{-11}$\\
$X_{21}$ & 12 & 20 & $8.74\times10^{-12}$\\
$X_{22}$ & 24 & 35 & $1.64\times10^{-11}$\\
\bottomrule
\end{tabular}
\caption{Numerical atlas consistency checks.  The counts give the number of
independent recomputations: one per standard
affine chart of the ambient product, and one per admissible choice of
dependent coordinates in the implicit solve.}
\label{app:tab-atlas-checks}
\end{table}

\subsection{Topological data and section-space dimensions}
\label{app:rr}

Section~\ref{sec:source-immersion} requires the dimensions
\(N_k=h^0(X,L^k)\)
of the section spaces associated with the polarizations used in the
metric construction.  We obtain these dimensions from intersection
theory and Riemann--Roch.

Let \(A=\prod_\alpha \PP^{n_\alpha}\) be the ambient product and let
\(J_\alpha\) denote the hyperplane class of its \(\alpha\)th factor.
Write \(Q_a=\sum_\alpha q_a^\alpha J_\alpha\) for the class of the
\(a\)th defining section.  Restricting an integral to the zero locus of
a section multiplies the integrand by the class of that locus; applied
once per stage, and combined with the comparison of the tangent bundles
of \(X\) and \(A\), this gives, for every ambient class \(\gamma\),
\begin{equation}
 \int_X\gamma|_X=\int_A\gamma\prod_a Q_a,
 \qquad
 c(TX)=c(TA)\prod_a(1+Q_a)^{-1}\big|_X .
 \label{eq:topological-data}
\end{equation}
Here \(c(TX)\) and \(c(TA)\) are the total Chern classes of the indicated
tangent bundles.
For a negative entry in a configuration matrix, \(Q_a\) is the class of
the corresponding section on the intermediate variety where it is
defined; carried down one stage at a time, the whole calculation
reduces to polynomial arithmetic in the \(J_\alpha\), truncated by
\(J_\alpha^{n_\alpha+1}=0\).

Since \(K_X\simeq\cO_X\) and the chosen \(L\) is ample, Kodaira
vanishing gives \(H^i(X,L^k)=0\) for \(i>0\) and \(k>0\).  Threefold
Riemann--Roch therefore yields
\begin{equation}
 h^0(X,L^k)
 =
 \chi(X,L^k)
 =
 \frac{L^3}{6}k^3+
 \frac{c_2(X)\cdot L}{12}k .
 \label{eq:rr-dimension}
\end{equation}
Here \(\chi(X,L^k)\) is the holomorphic Euler characteristic, while
\(L^3\) and \(c_2(X)\!\cdot L\) are the corresponding intersection numbers.
The resulting exact dimensions are
\begin{center}
\begin{tabular}{lrrl}
\toprule
geometry & \(L^3\) & \(c_2(X)\!\cdot L\) & \(h^0(X,L^k)\)\\
\midrule
\(X_{11}\) & 23 & 86 & \((23k^3+43k)/6\)\\
\(X_{21}\) & 28 & 76 & \((28k^3+38k)/6\)\\
\(X_{22}\) & 50 & 104 & \((50k^3+52k)/6\)\\
\bottomrule
\end{tabular}
\end{center}

These formulas determine the source dimensions used in
Sec.~\ref{sec:source-immersion} and the sizes of the corresponding
unrestricted Hermitian parameter spaces.  Appendix~\ref{app:immersion} independently verifies
that the section vectors used in the computation attain these
dimensions and define the required immersions.

\section{Sampling measure and immersion of the source maps}
\label{app:certificates}

\subsection{Fibrewise sampling measure}
\label{app:crofton}

This appendix provides the geometric justification for the sampling
measure of Sec.~\ref{sec:sampling} and defines the numerical checks
reported below.  Let \(Y_\beta\subset\PP^n\) be the fibre over a base
point \(\beta\), and suppose that \(Y_\beta\) is a smooth complex
\(d_{\rm f}\)-fold of fixed projective degree.  Let
\[
 \Lambda\simeq\PP^{n-d_{\rm f}}\subset\PP^n
\]
be a complementary linear subspace drawn from the Haar measure on the
corresponding Grassmannian.  For almost every \(\Lambda\), the intersection
\(Y_\beta\cap\Lambda\) is transverse and finite.  The complex-projective
kinematic, or Crofton, formula~\cite{howard1993} states that, for any
integrable function \(\varphi\),
\begin{equation}
 \mathbb E_\Lambda\!\left[
 \sum_{p\in Y_\beta\cap\Lambda}\varphi(p)
 \right]
 =
 C\int_{Y_\beta}\varphi\,
 \frac{\omega_{\rm FS}^{d_{\rm f}}}{d_{\rm f}!}.
 \label{eq:crofton-identity}
\end{equation}
Here \(\mathbb E_\Lambda\) denotes averaging over \(\Lambda\),
\(\omega_{\rm FS}\) is the Fubini--Study form of the projective space
containing the fibre, and \(C\) depends only on the normalization of the
Haar and Fubini--Study measures.

Equation~\eqref{eq:crofton-identity} says that the complete collection of
intersection points, averaged over random \(\Lambda\), represents integration
against the Fubini--Study volume form on the fibre.  Sampling the base
point independently with the Fubini--Study measure therefore gives the
proposal forms in Eq.~\eqref{eq:proposal-forms}.  The constant \(C\) is
independent of \(\beta\) on each constant-degree family and cancels from
the normalized importance weights.  The fibres are smooth and have the
stated degree away from their algebraic discriminant locus, which has
Fubini--Study measure zero.

Random linear sections and their induced Fubini--Study measures have
previously been used to sample projective Calabi--Yau hypersurfaces and
ordinary complete intersections
\cite{douglas2006,braun2008,larfors2022}, where the point distribution
is usually justified through the random-sections result of
Ref.~\cite{shiffman1999zeros}.  The additional ingredient for
the gCICYs considered here is that the later equations are
evaluated through their regular local representatives on the appropriate
intermediate varieties, as described in Sec.~\ref{sec:representatives}.

The numbers of points in one complete fibre intersection are the
projective intersection numbers
\begin{equation}
 \int_{\PP^3}(4H)H^2=4,
 \qquad
 \int_{\PP^4}(2H)(3H)H^2=6,
 \qquad
 \int_{\PP^2}(3H)H=3.
 \label{eq:supp-fibre-degrees}
\end{equation}
Here \(H\) is the hyperplane class of the projective space containing the
fibre.  The factors \(4H\), \((2H)(3H)\), and \(3H\) describe the
quartic, \((2,3)\), and cubic fibres, while the final powers of \(H\)
represent the complementary random linear subspaces.  These three
intersection numbers give the cluster sizes \(4\), \(6\), and \(3\) in
Eq.~\eqref{eq:cluster-sizes}.  The auxiliary member
\(X_{11}^{(4)}\), used only in an independent check of the sampling
method, has the same quartic-fibre construction as \(X_{11}\).

\paragraph{Local proposal density.}

To evaluate an importance weight, the proposal form and the
Calabi--Yau volume form must be written relative to the same local
coordinate-volume element.  In intrinsic holomorphic coordinates
\(u=(u^1,u^2,u^3)\), write
\[
 \dd\nu_{\rm prop}
 =
 \rho_{{\rm prop},u}\,\dd\lambda_u,
 \qquad
 \dd\mu_\Omega
 =
 \rho_{\Omega,u}\,\dd\lambda_u.
\]
For each ambient projective factor \(\alpha\), let \(G_\alpha\) be the
\(3\times3\) Hermitian matrix obtained by pulling its Fubini--Study
metric back to the tangent space of \(X\).  Introduce formal variables
\(t_x,t_y,t_z\), used only to select the required mixed term from a
determinant expansion.

For \(X_{11}\) and \(X_{21}\), the proposal form is
\(\omega_y\wedge\omega_x^2/2\), and its local density is
\begin{equation}
 \rho_{{\rm prop},u}
 =
 [t_x^2t_y]\,
 \det(t_xG_x+t_yG_y).
 \label{eq:proposal-density-two-factor}
\end{equation}
The notation \([t_x^2t_y]\) means ``take the coefficient of
\(t_x^2t_y\)''.  For \(X_{22}\), the proposal form is
\(\omega_x\wedge\omega_y\wedge\omega_z\), and
\begin{equation}
 \rho_{{\rm prop},u}
 =
 [t_xt_yt_z]\,
 \det(t_xG_x+t_yG_y+t_zG_z).
 \label{eq:proposal-density-three-factor}
\end{equation}
These formulas follow by expanding the top exterior power of the sum of
the pulled-back Fubini--Study forms.  Under a holomorphic change of
intrinsic coordinates, \(\rho_{{\rm prop},u}\) and
\(\rho_{\Omega,u}\) acquire the same squared Jacobian factor.  Their
ratio, and hence every importance weight in
Eq.~\eqref{eq:importance-weight}, is therefore independent of the
intrinsic chart.

\paragraph{Complete-intersection rule.}

The Crofton formula involves the sum over the complete intersection
\(Y_\beta\cap\Lambda\).  We therefore accept or reject each complete intersection,
including all roots produced by one random linear section, as a unit.  An
intersection is accepted only
when

\begin{enumerate}
 \item it contains the expected number of finite roots;
 \item every root satisfies all local defining equations to the specified
 numerical tolerance; and
 \item a nonsingular intrinsic coordinate chart is available at every
 root.
\end{enumerate}

If any condition fails, the entire intersection is rejected.  All roots
of an accepted intersection are included.  Importance weights are
computed only after this decision, so neither the target measure nor the
Monge--Amp\`ere residual can influence which intersections are accepted.
The minimum projective separation between distinct roots is monitored to
detect possible numerical duplicates.  For all samples reported here it is
far above \(10^{-7}\), so imposing a separation threshold at that value
would leave the samples unchanged.

\paragraph{Numerical checks.}

Five quantities describe the completeness of the computed intersections
and the concentration of their importance weights.

The numbers \(N_{\rm try}\) and \(N_{\rm acc}\) are the numbers of
attempted and accepted complete intersections; they
count fibre intersections rather than individual points.  Each accepted
intersection contributes, respectively, \(4\), \(6\), or \(3\) points
for \(X_{11}\), \(X_{21}\), or \(X_{22}\).

For two distinct roots \(p\) and \(q\) in the same fibre, let
\(\widehat z_p\) and \(\widehat z_q\) be unit-norm homogeneous
representatives in the fibre projective space.  Their projective
separation is
\begin{equation}
 \delta(p,q)
 =
 \sqrt{1-\left|
 \widehat z_p^\dagger\widehat z_q
 \right|^2}.
 \label{eq:projective-root-separation}
\end{equation}
The quantity \(\delta_{\min}\) is the smallest value of
\(\delta(p,q)\) among distinct roots in all accepted intersections.  A
very small value would indicate nearly coincident roots or a possible
numerical duplication.

To check the defining equations, write the \(a\)-th regular local
equation as
\[
 f_a(u)=\sum_\nu c_{a\nu}m_\nu(u),
\]
where \(m_\nu\) are its monomial terms.  The relative defining-equation
residual at a computed point \(p\) is
\begin{equation}
 \varepsilon_a(p)
 =
 \frac{|f_a(p)|}
 {\max\!\left\{
 1,\sum_\nu|c_{a\nu}m_\nu(p)|
 \right\}}.
 \label{eq:relative-root-residual}
\end{equation}
The quantity \(\varepsilon_{\max}\) is the largest value of
\(\varepsilon_a(p)\) over all roots of accepted intersections and all local defining
equations.

For one numerical sample containing \(N_{\rm pt}\) points, let the
importance weights be normalized by
\[
 w_i\geq0,
 \qquad
 \sum_{i=1}^{N_{\rm pt}}w_i=1.
\]
The largest weight relative to uniform weighting is
\begin{equation}
 R_{\max}
 =
 N_{\rm pt}\max_iw_i.
\end{equation}
Uniform weights give \(R_{\max}=1\); for example,
\(R_{\max}=30\) means that the largest point weight is thirty times the
uniform value.

Order the weights as
\[
 w_{(1)}\geq w_{(2)}\geq\cdots\geq w_{(N_{\rm pt})}.
\]
The fraction of the total weight carried by the largest-weight one
percent of the points is
\begin{equation}
 W_{1\%}
 =
 \sum_{i=1}^{\lceil0.01N_{\rm pt}\rceil}w_{(i)}.
\end{equation}
Uniform weights would give \(W_{1\%}\simeq1\%\).

\begin{table}[ht]
\centering
\small
\begin{tabular}{lccccc}
\toprule
geometry
& \(N_{\rm try}/N_{\rm acc}\)
& \(\delta_{\min}\)
& \(\varepsilon_{\max}\)
& \(R_{\max}\)
& \(W_{1\%}\)\\
\midrule
\(X_{11}\)
& \(65{,}536/65{,}536\)
& \(3.41\times10^{-2}\)
& \(3.20\times10^{-15}\)
& \(33.51\)
& \(7.45\%\)\\
\(X_{21}\)
& \(8{,}192/8{,}192\)
& \(7.02\times10^{-2}\)
& \(7.44\times10^{-16}\)
& \(13.94\)
& \(4.97\%\)\\
\(X_{22}\)
& \(16{,}384/16{,}384\)
& \(5.88\times10^{-2}\)
& \(8.89\times10^{-16}\)
& \(116.51\)
& \(11.40\%\)\\
\bottomrule
\end{tabular}
\caption{Numerical checks of the complete fibre intersections and their
importance weights.  The first numerical column gives the total numbers
of attempted and accepted intersections across the independent
calculations.  The remaining columns give the smallest observed root
separation and the largest observed defining-equation residual, relative point
weight, and top-one-percent weight fraction.}
\label{app:tab-sampler-audits}
\end{table}

Every attempted intersection in these checks was accepted with the
expected number of roots.  The defining-equation residuals are close to
floating-point precision, and no nearly coincident roots were observed in these finite
samples.  The importance weights are nonuniform, most visibly for
\(X_{22}\): its largest observed point weight is \(116.51\) times the
uniform value, and the largest-weight one percent of the points carry
\(11.40\%\) of the total weight.  These quantities enter the effective
sample sizes reported in Sec.~\ref{sec:training}.

These calculations test numerical completeness and weight concentration on
the sampled fibres; they are not interval-arithmetic proofs for every
floating-point root.

\subsection{Source-space completeness and immersion}
\label{app:immersion}

The source vectors used in the metric calculation consist of restrictions
of ambient monomials.  Denote their source line bundle by \(L_{\rm src}\).
For \(X_{11}\), the auxiliary presentation has degree \((2,2,0)\)
includes the auxiliary \(\PP^1\) introduced by the polynomial
stabilization.  If \([z_0:z_1]\) are its homogeneous coordinates, the
equation \(z_0+z_1=0\) fixes that factor, so its effective
source degree is \((2,2)\).

\begin{table}[ht]
\centering
\begin{tabular}{lrrrr}
\toprule
geometry & effective source degree &
\(h^0(X,L_{\rm src})\) & prime \(p\) &
nonzero minor modulo \(p\)\\
\midrule
\(X_{11}\) & \((2,2)\)     & 45 & 11 & 2\\
\(X_{21}\) & \((1,1)\)     & 11 & 11 & 4\\
\(X_{22}\) & \((1,1,1)\)   & 17 & 13 & 8\\
\bottomrule
\end{tabular}
\caption{Exact restriction-rank witnesses for the source vectors used in
the metric calculation.}
\label{app:tab-source-immersion}
\end{table}

At each indicated prime, exact reduction gives a three-dimensional
complete intersection and a \(p\)-torsion-free integral model.
Evaluating the chosen sections at exact finite-field points produces
the nonzero maximal minors displayed in
Table~\ref{app:tab-source-immersion}.  Their nonvanishing proves that the
chosen sections are linearly independent in characteristic zero.
Appendix~\ref{app:rr} gives the matching dimensions
\[
h^0(X,L_{\rm src})=45,\ 11,\ 17,
\]
so the chosen source vectors are bases of the complete restricted
section spaces.

The corresponding ambient line bundles are very ample and define Segre
or Segre--Veronese embeddings.  Restricting these embeddings to \(X\), and
removing only section relations that vanish identically on \(X\), preserves
a closed immersion.  An invertible reference Hermitian form changes only
the projective coordinates, while its \(m\)-fold power composes the map
with a Veronese embedding.  Both operations preserve immersion.
Consequently, the source maps used in Proposition~\ref{prop:positive} are
immersions for all three geometries.

\section{Numerical settings}
\label{app:settings}

This appendix records the initialization, optimization, validation and sample
settings for the calculations in Sec.~\ref{sec:results}.

\subsection{Reference-term scale and function-preserving expansions}
\label{app:eps-and-activation}

The coefficient \(\varepsilon_{\rm ref}\) in
Eq.~\eqref{eq:positive-density} is a numerical setting, not a universal
constant.  We use \(10^{-4}\) for the \(X_{11}\) calculations, the
matched \(X_{21}\) bond comparison and the \(X_{22}\) calculations.
The exploratory \(X_{21}\) degree--bond comparison uses \(10^{-10}\).
Every direct comparison uses a common value of \(\varepsilon_{\rm ref}\).
Write \(F_{B,\varepsilon}=\|Bv\|^2+\varepsilon F_0^m\) for the density with
factor \(B\) and reference coefficient \(\varepsilon\).  For any two positive
choices \(\varepsilon_1\) and \(\varepsilon_2\), the change can be absorbed by
the unrestricted overall scale of the factor:
\begin{equation}
 B_2=\sqrt{\frac{\varepsilon_2}{\varepsilon_1}}\,B_1
 \quad\Longrightarrow\quad
 F_{B_2,\varepsilon_2}
 =\frac{\varepsilon_2}{\varepsilon_1}F_{B_1,\varepsilon_1}.
 \label{eq:epsilon-scale-equivalence}
\end{equation}
The corresponding potentials differ only by an additive constant and
therefore define the same metric.  Thus \(\varepsilon_{\rm ref}\) changes
the parameter scale, initialization and numerical conditioning, but not the
family of metrics represented by an unconstrained \(B_\theta\).

Bond enlargements preserve the represented function as follows.  For one expanded bond,
suppressing physical indices and the other bond axes, the initialization
has the block form
\begin{equation}
 \widetilde C^{[j]}=
 \begin{pmatrix}C^{[j]}&U^{[j]}\end{pmatrix},
 \qquad
 \widetilde C^{[j+1]}=
 \begin{pmatrix}C^{[j+1]}\\0\end{pmatrix}.
\label{eq:function-preserving-bond-expansion}
\end{equation}
The new path contributes \(U^{[j]}0=0\), so the represented function is
unchanged.  The derivative with respect to the zero incoming row of the
next core is proportional to the nonzero \(U^{[j]}\), however, and is
generically nonzero.  At the exact embedding point, the derivative with
respect to \(U^{[j]}\) itself is still zero because it is multiplied by the
zero row in the next core.  Activation is therefore sequential: the new
incoming row moves first, after which gradients can also rotate
\(U^{[j]}\).  Applying this embedding at every internal bond preserves the
function before joint optimization.  Initializing
both sides at zero would instead leave the entire new path at a stationary
zero contribution.

The same construction applies to dictionary enlargements.  A new
local operator \(Q_\alpha\) multiplies its coefficient-core slice, so
initializing both factors at zero would leave the pair at a stationary
point.  Instead the new \(Q_\alpha\) are added as a nonzero orthonormal
complement of the existing dictionary while their coefficient slices
start at zero: the represented function is unchanged, the derivative
with respect to each new slice is generically nonzero, and subsequent
joint training can also rotate the dictionary.

\subsection{Pointwise normalization and contraction cost}
\label{app:pointwise-evaluation}

To prevent overflow during the sitewise contraction, the section values and
their first derivatives are divided at each point by the same positive scalar
obtained from the reference norm.  By homogeneity this multiplies the
representatives of \(F\),
\(\partial F\) and \(\partial\bar\partial F\) by one common factor; the
corresponding logarithmic correction is recorded separately.  This is a
rescaling of the stored jet, not a change of the underlying function.  With
\(F_a=\partial_aF\), \(F_{\bar b}=\bar\partial_{\bar b}F\), and
\(F_{a\bar b}=\partial_a\bar\partial_{\bar b}F\), the common factor cancels from
\begin{equation}
 \partial_a\bar\partial_{\bar b}\log F
 =\frac{F_{a\bar b}}{F}-\frac{F_aF_{\bar b}}{F^2},
\end{equation}
and restoring the recorded logarithmic correction preserves \(\log F\).
The procedure therefore preserves the first and mixed logarithmic
derivatives in exact arithmetic.

The contraction cost was measured for \(X_{11}\) in complex128
on an RTX 4090, using batch size \(256\), \(d=45\) and \(D=5\).  A fit to
the median forward time over \(m=3,\ldots,8\) has slope \(2.15\) ms per
site and coefficient of determination \(R^2=0.9999\); the two smallest cases
have no internal \(D^2\)
transfer block and lie off this line.  For \(m=3\), the
model gives \(3.9\) ms for the metric forward pass and \(13.3\) ms for a
complete optimizer update.  These timings characterize the TN contraction;
they are not a speed comparison with the neural \(\phi\)-model.

\subsection{Settings for \texorpdfstring{\(X_{11}\)}{X11}}
\label{app:x11-settings}

\paragraph{TN--\(\phi\) comparison.}
Both models use the same \(196{,}608\) training points, \(24{,}576\)
validation points, geometric loss coefficients, and batch size \(256\).
Before each update, a total gradient norm above \(5\) is rescaled to \(5\).
Both begin with a 20-epoch initial stage.  For
the subsequent primary and precision stages, the \(\phi\)-model selects rates from
\((3\times10^{-4},10^{-4})\), \((10^{-3},3\times10^{-4})\), and
\((3\times10^{-3},10^{-3})\); validation selects the middle pair.  The TN is
not swept and uses \(2\times10^{-4}\) and \(5\times10^{-5}\).  In each stage,
eight validation evaluations without a relative gain of \(0.5\%\) (primary)
or \(0.3\%\) (precision) stop training.  The validation sample alone controls
selection and stopping; a separate \(49{,}152\)-point sample is diagnostic
only.

The TN uses complex128 and the \(\phi\)-model float64.  Denoting the fully
connected neural network with parameters \(\vartheta\) by \(N_\vartheta\), the
neural correction is \(\phi_\vartheta=0.1N_\vartheta(\rho^{(s_2)})\), with the prefactor fixing
its initial scale.  Only when \(e^{\widetilde\ell_i}\) is formed inside the
training loss, \(\widetilde\ell_i\) is restricted to \([-20,20]\) for the TN
and unrestricted \(H_6\), and to \([-15,10]\) for the \(\phi\)-model.  This
prevents numerical overflow; the logarithmic loss and all reported
observables use the unrestricted value.

\paragraph{Optimization of unrestricted \(H_6\).}
The three \(H_6\) paths use the same training and validation points and batch
size \(512\); total gradient norms above \(2\) are rescaled to \(2\).  The
direct cubic lift was checked on an independent \(2{,}048\)-point sample;
the maximum relative errors in the product-section relation and metric
entries are \(2.8\times10^{-14}\) and \(1.2\times10^{-9}\), respectively.
Starting from this lift, optimization proceeds in
blocks of \(30{,}000\) updates with a fixed logarithmic learning-rate
sequence from \(3\times10^{-5}\) to \(10^{-9}\), retaining the lowest
validation value between blocks.  It continues for at least \(1200\) s,
longer than the slowest TN optimization (\(1133\) s), and ends with a
validation-based learning-rate reduction from \(3\times10^{-6}\).  The
three optimization times are \(1502\), \(1386\), and \(1331\) s.  These
times exclude the one-time construction of section jets.

\subsection{Settings for \texorpdfstring{\(X_{21}\)}{X21} and \texorpdfstring{\(X_{22}\)}{X22}}
\label{app:x21-x22-settings}

\paragraph{Initial \(X_{21}\) continuation.}
The three \(D=8\) source models and three \(D=12\) continuations share one
data division.  Equation~\eqref{eq:function-preserving-bond-expansion}
preserves the function at the \(D=8\to12\) embedding, after which all cores are
trained jointly.  The complex128 \(D=12\) runs use Adam for 30 epochs, batch
size \(1024\), learning rate \(2\times10^{-4}\), and gradient-rescaling
threshold \(2\).

\paragraph{Matched \(X_{21}\) bond-dimension comparison.}
Each fixed \(D=8\) model is continued once at \(D=8\) and once after a
function-preserving expansion to \(D=12\).  Within each pair, the 30 epochs,
learning rate, batch size, gradient-rescaling threshold, minibatch sequence, and \(5{,}760\)
updates are identical; only the additional bond channels differ.

\paragraph{\(X_{22}\) dictionary and degree comparisons.}
The \(q=22\), \(q=60\), and fixed \(q=289\) degree-four models share the same
selected \(q=22\) starting metric; the two enlargements preserve it to
numerical precision.  They share \(131{,}064\) training points, \(32{,}766\)
validation points, minibatch order, and complex128 Adam settings: 20 epochs,
batch size \(64\), learning rate \(5\times10^{-5}\), and gradient-rescaling
threshold \(2\).
The \(q=22\) and \(q=60\) dictionaries are trainable; the complete \(q=289\)
dictionary is fixed and only its coefficient cores are trained.  The
\(q=60,k=6\) continuation uses the same optimizer settings with independently
generated samples.
Each reported model is selected by its validation score subject to sampled
positivity.

\bibliographystyle{unsrtnat}
\bibliography{references}

\end{document}